\theoremstyle{plain}
 \newtheorem{theorem}{Theorem}
 \newtheorem{proposition}{Proposition}
 \newtheorem{lemma}{Lemma}
\theoremstyle{definition}
 \newtheorem{definition}{Definition}
 \newtheorem{example}{Example}
\theoremstyle{remark}
\newcommand{\trs}{\mathcal{R}}
\newcommand{\equivalent}{\Leftrightarrow}
\newcommand{\cnot}{\mbox{\textbf{Not}}}
\newcommand{\cvar}{\mbox{\textbf{Var}}}
\newcommand{\Cor}{\mbox{\textbf{Or}}}
\newcommand{\cexists}{\mbox{\textbf{Exists}}}
\newcommand{\fnot}{\mathtt{not}}
\newcommand{\ffor}{\mathtt{or}}
\newcommand{\fand}{\mathtt{and}}
\newcommand{\member}{\mathtt{in}}
\newcommand{\verify}{\mathtt{ver}}
\newcommand{\brouge}{\blacklozenge}
\newcommand{\Ptime}{\mbox{\sc Ptime}}
\newcommand{\NPtime}{\mbox{\sc NPtime}}
\newcommand{\Linspace}{\mbox{\sc linspace}}
\newcommand{\Logspace}{\mbox{\sc Logspace}}
\newcommand{\Pspace}{\mbox{\sc Pspace}}
\newcommand{\functional}{\mbox{\texttt{F}}}
\newcommand{\some}[3]{#1_{#2}, \cdots, #1_{#3}}
\newcommand{\many}[2]{\some{#1}{1}{#2}}
\edef\union{\bigcup}
\newcommand{\size}[1]{|#1|}
\newcommand{\Variables}{\mathcal{X}}
\newcommand{\Functions}{\mathcal{F}}
\def\funone{\texttt{f}}
\def\funtwo{\texttt{g}}
\def\main{\texttt{f}}
\newcommand{\Terms}[1]{\mathbf{T}(#1)}
\newcommand{\Constructors}{\mathcal{C}}
\def\conone{\mathbf{c}}
\newcommand{\Consterms}{\Terms{\Constructors}}
\edef\termone{t}
\edef\termtwo{s}
\edef\termthree{u}
\edef\termfour{v}
\edef\patone{p}
\newcommand{\progone}{\texttt{f}}
\newcommand{\Equations}{\mathcal{E}}
\newcommand{\sem}[1]{\llbracket #1 \rrbracket}
\def\To{\to}
\def\transto{\mbox{${\stackrel{\mbox{\tiny $+\;$}}{\To}}$}}
\def\normto{\mbox{${\stackrel{!\;}{\To}}$}}
\def\reftransto{\mbox{${\stackrel{*\;}{\To}}$}}
\newcommand{\too}{\rightsquigarrow}
\newdimen\proofrulebreadth \proofrulebreadth=.05em
\newdimen\proofdotseparation \proofdotseparation=1.25ex
\newdimen\proofrulebaseline \proofrulebaseline=2ex
\let\then\relax
\def\hfi{\hskip0pt plus.0001fil}
\mathchardef\squigto="3A3B
\newif\ifinsideprooftree\insideprooftreefalse
\newif\ifonleftofproofrule\onleftofproofrulefalse
\newif\ifproofdots\proofdotsfalse
\newif\ifdoubleproof\doubleprooffalse
\let\wereinproofbit\relax
\newdimen\shortenproofleft
\newdimen\shortenproofright
\newdimen\proofbelowshift
\newbox\proofabove
\newbox\proofbelow
\newbox\proofrulename
\def\shiftproofbelow{\let\next\relax\afterassignment\setshiftproofbelow\dimen0 }
\def\shiftproofbelowneg{\def\next{\multiply\dimen0 by-1 }%
\afterassignment\setshiftproofbelow\dimen0 }
\def\setshiftproofbelow{\next\proofbelowshift=\dimen0 }
\def\setproofrulebreadth{\proofrulebreadth}
\def\prooftree{
%
\ifnum  \lastpenalty=1
\then   \unpenalty
\else   \onleftofproofrulefalse
\fi
%
\ifonleftofproofrule
\else   \ifinsideprooftree
        \then   \hskip.5em plus1fil
        \fi
\fi
%
\bgroup
\setbox\proofbelow=\hbox{}\setbox\proofrulename=\hbox{}%
\let\justifies\proofover\let\leadsto\proofoverdots\let\Justifies\proofoverdbl
\let\using\proofusing\let\[\prooftree
\ifinsideprooftree\let\]\endprooftree\fi
\proofdotsfalse\doubleprooffalse
\let\thickness\setproofrulebreadth
\let\shiftright\shiftproofbelow \let\shift\shiftproofbelow
\let\shiftleft\shiftproofbelowneg
\let\ifwasinsideprooftree\ifinsideprooftree
\insideprooftreetrue
%
\setbox\proofabove=\hbox\bgroup$\displaystyle 
\let\wereinproofbit\prooftree
%
\shortenproofleft=0pt \shortenproofright=0pt \proofbelowshift=0pt
%
\onleftofproofruletrue\penalty1
}
\def\eproofbit{
%
\ifx    \wereinproofbit\prooftree
\then   \ifcase \lastpenalty
        \then   \shortenproofright=0pt  
        \or     \unpenalty\hfil         
        \or     \unpenalty\unskip       
        \else   \shortenproofright=0pt  
        \fi
\fi
%
\global\dimen0=\shortenproofleft
\global\dimen1=\shortenproofright
\global\dimen2=\proofrulebreadth
\global\dimen3=\proofbelowshift
\global\dimen4=\proofdotseparation
\global\count255=\proofdotnumber
%
$\egroup  
%
\shortenproofleft=\dimen0
\shortenproofright=\dimen1
\proofrulebreadth=\dimen2
\proofbelowshift=\dimen3
\proofdotseparation=\dimen4
\proofdotnumber=\count255
}
\def\proofover{
\eproofbit 
\setbox\proofbelow=\hbox\bgroup 
\let\wereinproofbit\proofover
$\displaystyle
}%
\def\proofoverdbl{
\eproofbit 
\doubleprooftrue
\setbox\proofbelow=\hbox\bgroup 
\let\wereinproofbit\proofoverdbl
$\displaystyle
}%
\def\proofoverdots{
\eproofbit 
\proofdotstrue
\setbox\proofbelow=\hbox\bgroup 
\let\wereinproofbit\proofoverdots
$\displaystyle
}%
\def\proofusing{
\eproofbit 
\setbox\proofrulename=\hbox\bgroup 
\let\wereinproofbit\proofusing
\kern0.3em$
}
\def\endprooftree{
\eproofbit 
  \dimen5 =0pt
%
\dimen0=\wd\proofabove \advance\dimen0-\shortenproofleft
\advance\dimen0-\shortenproofright
%
\dimen1=.5\dimen0 \advance\dimen1-.5\wd\proofbelow
\dimen4=\dimen1
\advance\dimen1\proofbelowshift \advance\dimen4-\proofbelowshift
%
\ifdim  \dimen1<0pt
\then   \advance\shortenproofleft\dimen1
        \advance\dimen0-\dimen1
        \dimen1=0pt
        \ifdim  \shortenproofleft<0pt
        \then   \setbox\proofabove=\hbox{%
                        \kern-\shortenproofleft\unhbox\proofabove}%
                \shortenproofleft=0pt
        \fi
\fi
%
\ifdim  \dimen4<0pt
\then   \advance\shortenproofright\dimen4
        \advance\dimen0-\dimen4
        \dimen4=0pt
\fi
%
\ifdim  \shortenproofright<\wd\proofrulename
\then   \shortenproofright=\wd\proofrulename
\fi
%
\dimen2=\shortenproofleft \advance\dimen2 by\dimen1
\dimen3=\shortenproofright\advance\dimen3 by\dimen4
%
\ifproofdots
\then
        \dimen6=\shortenproofleft \advance\dimen6 .5\dimen0
        \setbox1=\vbox to\proofdotseparation{\vss\hbox{$\cdot$}\vss}%
        \setbox0=\hbox{%
                \advance\dimen6-.5\wd1
                \kern\dimen6
                $\vcenter to\proofdotnumber\proofdotseparation
                        {\leaders\box1\vfill}$%
                \unhbox\proofrulename}%
\else   \dimen6=\fontdimen22\the\textfont2 
        \dimen7=\dimen6
        \advance\dimen6by.5\proofrulebreadth
        \advance\dimen7by-.5\proofrulebreadth
        \setbox0=\hbox{%
                \kern\shortenproofleft
                \ifdoubleproof
                \then   \hbox to\dimen0{%
                        $\mathsurround0pt\mathord=\mkern-6mu%
                        \cleaders\hbox{$\mkern-2mu=\mkern-2mu$}\hfill
                        \mkern-6mu\mathord=$}%
                \else   \vrule height\dimen6 depth-\dimen7 width\dimen0
                \fi
                \unhbox\proofrulename}%
        \ht0=\dimen6 \dp0=-\dimen7
\fi
%
\let\doll\relax
\ifwasinsideprooftree
\then   \let\VBOX\vbox
\else   \ifmmode\else$\let\doll=$\fi
        \let\VBOX\vcenter
\fi
\VBOX   {\baselineskip\proofrulebaseline \lineskip.2ex
        \expandafter\lineskiplimit\ifproofdots0ex\else-0.6ex\fi
        \hbox   spread\dimen5   {\hfi\unhbox\proofabove\hfi}%
        \hbox{\box0}%
        \hbox   {\kern\dimen2 \box\proofbelow}}\doll%
%
\global\dimen2=\dimen2
\global\dimen3=\dimen3
\egroup 
\ifonleftofproofrule
\then   \shortenproofleft=\dimen2
\fi
\shortenproofright=\dimen3
%
\onleftofproofrulefalse
\ifinsideprooftree
\then   \hskip.5em plus 1fil \penalty2
\fi
}
\newcommand{\ninfer}[3]
     {\prooftree
          #1 
          \justifies #2
          \using #3
      \endprooftree}
\renewcommand{\phi}{\varphi}
\newcommand{\vide}{\nnill}
\renewcommand{\epsilon}{\varepsilon}
\newcommand{\blanc}{\blacktriangledown}
\newcommand{\noir}{\blacktriangle}
\newcommand{\undef}{\bot}
\newcommand{\Put}{\texttt{put}}
\newcommand{\Union}{\texttt{match}}
\newcommand{\hyp}{\texttt{vhyp}}
\newcommand{\hypo}{\texttt{hyp}}
\newcommand{\listHyp}{\texttt{hypList}}
\newcommand{\append}{\texttt{append}}
\newcommand{\utrue}{\mathbf{T}}
\newcommand{\ufalse}{\mathbf{F}}
\edef\infSt{\prec}
\edef\infEq{\preceq}
\edef\infMulSt{\prec^p}
\edef\infMul{\preceq^p}
\newcommand{\precFS}{\prec_{\Functions}}
\newcommand{\precEqFS}{\preceq_{\Functions}}
\newcommand{\egalFS}{\approx_{\Functions}}
\edef\mpoSt{\prec_{ppo}}
\newcommand{\mpoXSt}{\prec^{p}_{ppo}}
\newcommand{\interp}[1]{\llparenthesis #1\,\rrparenthesis}
\newcommand{\suc}{\mbox{\textbf{s}}}
\newcommand{\zero}{\mbox{\textbf{0}}}
\def\nnill{\mbox{\textbf{nil}}}
\newcommand{\true}{\mbox{\textbf{tt}}}
\newcommand{\false}{\mbox{\textbf{ff}}}
\newcommand{\cons}{\mbox{\textbf{cons}}}
\newcommand{\eval}{\mbox{\texttt{eval}}}
\newcommand{\iif}{\mbox{\texttt{if\ }}}
\newcommand{\tthen}{\mbox{\texttt{then\ }}}
\newcommand{\eelse}{\mbox{\texttt{else\ }}}
\newcommand{\vrai}{\mbox{\tt tt}}
\newcommand{\faux}{\mbox{\tt ff}}
\newcommand{\ee}{\mbox{\tt{e}}}
\newcommand{\MaxPoly}{\textbf{Max-Poly}}
\newcommand{\PPO}{PPO}
\newcommand{\sousterme}{\unlhd}
\title{Observation of implicit complexity by non confluence}
\author{Guillaume Bonfante
\institute{Nancy University}
\email{guillaume(dot)bonfante(at)loria(dot)fr}
}
\begin{document}
\maketitle

\footnotetext{Work partially supported by project ANR-08-BLANC-0211-01 (COMPLICE)}

\begin{abstract}
We propose to consider non confluence with respect to implicit complexity. We come back to some well known classes of first-order functional program, for which we have a characterization of their intentional properties, namely the class of cons-free programs, the class of programs with an interpretation, and the class of programs with a quasi-interpretation together with a termination proof by the product path ordering. They all correspond to PTIME. We prove that adding non confluence to the rules leads to respectively PTIME, NPTIME and PSPACE. Our thesis is that the separation of the classes is actually a witness of the intentional properties of the initial classes of programs.
\end{abstract}


In implicit complexity theory, one of the issues is to characterize large classes of programs, not extensionally but intentionally. That is, for a given class of functions, to delineate the largest set of programs computing this class. One of the issues with this problem is that it is hard to compare (classes of) programs. Indeed, strict syntactical equality is definitely too restrictive, but larger (the interesting ones) relations are undecidable. So, comparing theories (defining their own class of programs) is rather complicated. Usually, one gives a remarkable example, illustrating the power of the theory.

We propose here an other way to compare sets of programs. The idea is to add a new feature--in the present settings, non determinism-- to two programming languages. Intuitively, if a function can be computed with this new feature in a language $L_1$ but non in the language $L_2$, we say that $L_1$ is more powerful than $L_2$.  Let us formalize a little bit our intuition.

Suppose for the discussion that programs are written as rewriting systems, that is, programming languages are classes of rewriting systems. Let us say furthermore that a program $p$ is simulated by $q$ whenever each step of rewriting in $t \stackrel{\ell \to r}{\to} u \in p$ can be simulated by a rewriting step $t' \stackrel{\ell' \to r'}{\to} u' \in q$. Equivalence of languages $L_1$ and $L_2$ states that any program in $L_1$ is simulated by a program in $L_2$ and vice versa.

Given a programming language $L$, its non deterministic extension $L.n$ is the programming language obtained by adding to $L$ an oracle $\mathbf{choose}(r_1,r_2)$ which, given two rules $r_1$ and $r_2$ which can be defined in $L$, applies the "right" rule depending on the context. So, given some $p \in L$, both $p \cup \{r_1\}$ and $p \cup \{ r_2\}$ are supposed to be in $L$, but  not necessarily $p \cup \{ r_1, r_2\}$.

Suppose now, that we are given two programming languages $L_1$ and $L_2$ such that functions computed in $L_1.n$ are strictly included in those in $L_2.n$. Then, $L_1$ cannot simulate $L_2$. Ad absurdum, suppose that $L_1$ can simulate $L_2$, take $f$ computed by $p_2 \in L_2.n$ but not in $L_1.n$. Then, each rule $\ell \to r$ in $p_2$ is simulated by a rule $\ell' \to r'$ in some program $p_1 \in L_1$. But then, the derivation $t_1 \to t_2 \to \cdots \to t_n$ in $L_2.n$ can be simulated by some derivation $t'_1Ê\to t'_2 \to \cdots \to t'_n$ in $L_1.n$. This leads to the contradiction.

It is clear that the notion of equivalence we took for the discussion is very strong. However, we believe that the argument would hold in a larger context.

In this paper, we observe three programming languages, 
\begin{itemize}
\item programs with a polynomially bounded constructor preserving interpretation ($\functional.\text{cons}$) which extend cons-free programs, 
\item programs with (polynomial) strict interpretation ($\functional.{\text{SI}}$) and 
\item programs with a quasi-interpretation together with a proof of termination by \PPO, written \functional.{\text{QI}.\PPO}. 
\end{itemize}

These three languages characterize \Ptime. The first one is a new result, the two latter ones are respectively proved in~\cite{BonfanteCichonMarionTouzet01} and \cite{BMM07}. 

\centerline{
\xymatrix{
\functional.{\text{cons}} \ar[d] & \functional.{\text{SI}} \ar[d] & \functional.{\text{QI}.\PPO} \ar[d]\\
\Ptime \ar@{}[r]|{=} & \Ptime \ar@{}[r]|{=} & \Ptime}
}

\vspace{2ex}
Their non deterministic observation characterize  \Ptime, \NPtime \ and \Pspace. The second characterization has been proven in~\cite{BonfanteCichonMarionTouzet01}.

\centerline{
\xymatrix{
\functional.{\text{cons}.n} \ar[d] & \functional.{\text{SI}.n} \ar[d] & \functional.{\text{QI}.\PPO.}n \ar[d]\\
\Ptime \ar@{}[r]|{\stackrel{?}{\neq}} & \NPtime \ar@{}[r]|{\stackrel{?}{\neq}} & \Pspace}
}

\vspace{2ex}
The issue of confluence of Term Rewriting Systems has been largely studied, see for instance 
\cite{terese}.
It benefits from some nice properties, for instance it is modular and algorithms
are given to automatically compute the confluence up to termination. 


It is clear that there is also an intrinsic motivation for a study of non confluent programs. It would not be reasonable to cover all the researches dealing with this issue. But, let us make three remarks. First, since non-confluence can give us some freedom to write programs, it is of interest to observe what new functions  this extra feature allows us to compute. From our result about non-confluent programs in $\functional.{\text{cons}.n}$, one may extract a compilation procedure to compute them "deterministically". Second, Kristiansen and Mender in~\cite{Kristiansen06,Kristiansen09} have proposed a scale --a la Grzegorczyk, using non determinism, to characterize $\Linspace$. Finally, one should keep in mind characterizations in the logical framework. Let us mention for instance the characterization of \Pspace\ given in~\cite{Gaboardi08a}. It is an extension of a characterization of \Ptime, and thus, we think that their construction is a good candidate for observation as presented above. 






\section{Preliminaries}

We suppose that the reader has familiarity with first-order rewriting. We briefly 
recall the context of the theory, essentially to fix the notations. Dershowitz and
Jouannaud's survey~\cite{DJ90} of rewriting is a good entry point 
for beginners. 

Let $\Variables$ denote a (countable) set of \emph{variables}. 
Given a \emph{signature} $\Sigma$, the set of \emph{terms} over $\Sigma$ and
$\Variables$ is denoted by $\Terms{\Sigma,\Variables}$ and the 
set of \emph{ground terms}, that is terms without variables, by $\Terms{\Sigma}$. 

The size $\size{t}$ of a term $t$ is defined as the number of symbols in $t$. 
For example the size of the term $f(a, x)$ is $3$.

A \emph{context} is a term $C$ with a particular variable $\lozenge$. If $t$ is a term, 
$C[t]$ denotes the term $C$ where the variable $\lozenge$ has been replaced
by $t$.

\subsection{Syntax of programs} 

Let $\Constructors$ be a (finite) signature
of \emph{constructor} symbols and $\Functions$ a (finite) signature of\emph{ function symbols}. Thus, we are given an algebra of constructor terms $\Terms{\Constructors, \Variables}$. A rule is a pair $(\ell,r)$, next written $\ell \to r$, where:
\begin{itemize}
\item $\ell = \funone(\many{\patone}{n})$ where $\funone \in \Functions$ and $\patone_i \in \Terms{\Constructors,\Variables}$ for all $i = 1,\ldots, n$,
\item and $r \in \Terms{\Constructors \cup \Functions,\Variables}$ is a term such that any variable occuring in $r$ also occurs in $\ell$. 
\end{itemize}

\begin{definition}
\label{def:untyped-program}
A \emph{program} is a quadruplet $\progone = \langle \Variables, \Constructors, \Functions, \Equations \rangle$ such that $\Equations$ is
a finite set of rules.  
We distinguish among $\Functions$ a main function symbol whose name is
given by the program name $\progone$. $\functional$ denotes the set of programs. 
\end{definition}

The set of rules induces a rewriting relation $\to$. The relation $\transto$ is the transitive closure
of $\to$, and  $\reftransto$ is the reflexive and transitive closure  of $\to$. Finally, we say
that a term $t$ is a \emph{normal form} if there is no term $u$ such that $t \to u$. Given two terms
$t$ and $u$, $t \normto u$ denotes the fact that $t \reftransto u$ and $u$ is a normal form.

All along, when it is not explicitly mentioned, we suppose programs to be \emph{confluent}, 
that is, the rewriting relation is confluent. 

 The domain of the computed functions is 
the constructor term algebra $\Consterms$. The program
$\progone= \langle \Variables, \Constructors, \Functions, \Equations \rangle$ computes a partial function $\sem{\main} : \Consterms^n \to
\Consterms$ defined as follows. For every $\many{\termthree}{n} \in \Consterms,
\sem{\main}(\many{\termthree}{n}) = \termfour$ iff $
\main(\many{\termthree}{n}) \normto \termfour$ and $\termfour$ is a constructor term.  

\begin{definition}[Call-tree]Suppose we are given a program $\langle \Variables, \Constructors, \Functions, \Equations\rangle$. 
Let $\too$ be the relation $$(f,t_1,\ldots, t_n) \too (g,u_1, \ldots, u_m) \Leftrightarrow f(t_1, \ldots, t_n) \to C[g(v_1, \ldots, v_m)] \reftransto C[g(u_1, \ldots, u_m)]$$ where $f$ and $g$ are function symbols, $C$ is a context and $t_1, \ldots, t_n, u_1, \ldots, u_m$ are constructor terms.  Given a function symbol $f$ and constructor terms $t_1, \ldots, t_n$,  the relation $\too$ defines a tree whose root is $(f,t_1, \ldots, t_n)$ and $\eta'$ is a daughter of $\eta$ iff $\eta \too \eta'$. The relation $\too^+$ is the transitive closure of $\too$. 
\end{definition}


\subsection{Interpretations of programs}\label{sec:qi}

Given a signature $\Sigma$, a $\Sigma$-algebra on a domain $A$ is a mapping 
$\interp{-}$ which  associates to every $n$-ary symbol $f \in \Sigma$ an $n$-ary 
function $\interp{f} : A^n \to A$. Such a $\Sigma$-algebra can be extended to terms by:
\begin{itemize}
\item $\interp{x} = 1_{A}$, that is the identity on $A$,  for $x \in \Variables$,
\item $\interp{f(t_1, \ldots, t_m)} = \text{comp}(\interp{f},\interp{t_1},\ldots, \interp{t_m})$ where $\text{comp}$ is the composition of functions.
\end{itemize} 
 Given a term $t$ with $n$ variables, $\interp{t}$  is a function $A^n \to A$.


\begin{definition}\label{def:interpretation}Given an ordered set $(A,<)$ and a program $\langle \Variables, \Constructors, \Functions, \Equations \rangle$, let us consider a  $(\Constructors \cup \Functions)$-algebra $\interp{-}$ on $A$. It  is said to:
\begin{enumerate}
\item be strictly monotonic if for any symbol $f$, the function $\interp{f}$ is a strictly monotonic function, that is if $x_i > x'_i$, then $$\interp{f}(x_1, \ldots, x_n) > \interp{f}(x_1, \ldots,x'_i,\ldots, x_n),$$
\item be weakly monotonic  if for any symbol $f$, the function $\interp{f}$ is a weakly monotonic function, that is if $x_i \geq x'_i$, then $$\interp{f}(x_1, \ldots, x_n) \geq \interp{f}(x_1, \ldots,x'_i,\ldots, x_n),$$
 \item have the weak sub-term property if for any symbol $f$, the function $\interp{f}$ verifies $\interp{f}(x_1, \ldots, x_n) \geq x_i$ with $i \in 1,\ldots,n$,\\
 \item to be strictly compatible (with the rewriting relation) if for all rules $\ell \to r$, $\interp{\ell} > \interp{r}$,
 \item to be weakly compatible if for all rules $\ell \to r$, $\interp{\ell} \geq \interp{r}$,
 \end{enumerate}
 \end{definition}
 
 \begin{definition}Given an ordered set $(A,<)$ and a program $\langle \Variables, \Constructors, \Functions, \Equations \rangle$, a  $(\Constructors \cup \Functions)$-algebra on $A$ is said to be a strict interpretation whenever it verifies (1), (3), (4). It is a quasi-interpretation whenever it verifies (2), (3), (5). It is a monotone interpretation whenever it verifies (2) and (5).
 \end{definition}
 
 Clearly, a strict interpretation is a quasi-interpretation which itself is a monotone interpretation.
  When we want to speak arbitrarily of one of those concepts, we use the generic word "interpretation".  We also use this terminology to speak about the function $\interp{f}$ given a symbol $f$.

Finally, by default, $A$ is chosen to be the set of real non negative numbers with its usual ordering. Moreover, we restrict the interpretations over the real numbers to be \emph{Max-Poly functions}, that is functions obtained by finite compositions of the constant functions, maximum, addition and multiplication. \MaxPoly\ denotes the set of these functions.

\begin{example}\label{booleanOp} Equality on binary words in $\{0,1\}^*$, boolean operations, membership in a list (built on $\cons, \nnill$) are computed as follows.
\begin{eqnarray*}
\epsilon = \epsilon & \to& \true\\
{\bf i}(x) = {\bf i}(y) & \to& x = y \text { with } {\bf i} \in \{0,1\}\\
{\bf i}(x) = {\bf j}(y) & \to& \false\text{ with }{\bf i} \neq {\bf j} \in \{0,1\}\\
\ffor(\true,y) &\to& \true\\
\ffor(\false,y) &\to& y\\
\fand(\true,y) &\to& y\\
\fand(\false,y) &\to& \false\\
 \iif \true \ \tthen y \ \eelse z & \to& y\\
 \iif \false\ \tthen y \ \eelse z &\to& z\\
\member(a,\nnill) &\to& \false\\
\member(a,\cons(b,l)) & \to&  \iif\! a=b \; \tthen\! \true \; \eelse \member(a,l)\\
\end{eqnarray*}

Such a program has the strict interpretation\footnote{To simplify the verification of inequalities, interpretations are taken in $[1,\infty[$.} given by:
\begin{eqnarray*}
\interp{\epsilon} = \interp{\true} = \interp{\false} = \interp{\nnill} &=& 1\\
\interp{{\bf i}}(x) &=& x+1\text{ with }{\bf i} \in \{0,1\}\\
\interp{\cons}(x,y) &=&x+y+1\\
\interp{=\!}(x,y) = \interp{\ffor}(x,y) = \interp{\fand}(x,y) &=& x+y\\
\interp{\iif \tthenÊ\eelse\!\!\!\!}(x,y,z) &=& x+y+z\\
\interp{\member}(x,y) &=&(x+2) \times y
\end{eqnarray*}
\end{example}

\begin{definition}The interpretation of a symbol $f$ is said to be additive if it has the shape $\sum_i x_i + c$. A program with an interpretation is said to be additive when its \emph{constructors} are additive.
\end{definition}

\subsection{Termination by Product Path Ordering}

Let us recall that the Product Path Ordering is a particular form of
the Recursive Path Orderings, a class of simplification orderings
(and so well-founded). Pioneers of this subject include
Plaisted~\cite{Plai78}, Dershowitz~\cite{Der82}, Kamin and
L\'evy~\cite{KL80}. 

Finally, let us mention that Krishnamoorthy and Narendran
in~\cite{KN85} have proved that deciding whether a program terminates
by Recursive Path Orderings is a NP-complete problem.

Let $\infEq_\Sigma$ be a preorder on a signature $\Sigma$, called \emph{quasi-precedence}
or simply \emph{precedence}. We write $\infSt_\Sigma$ for the induced strict precedence
and $\simeq_\Sigma$ for the induced equivalence relation on $\Sigma$. Usually, the 
context makes clear what $\Sigma$ is, and thus, we drop the subscript $\Sigma$.

\begin{definition}\label{def:Muliset-ordering}
  Given an ordering $\infEq$ over terms $\Terms{\Sigma}$, the product extension of $\infEq$ over sequences, written $\infMul$, 
  is defined as $(s_1, \ldots, s_k) \infMulSt (t_1, \ldots, t_k)$ iff 
  \begin{itemize}
  \item for all $i\leq k: s_i \infEq t_i$ and,
  \item  there is some $j\leq k$ such that $s_i \prec t_i$.
  \end{itemize}
where $\prec$ is the strict part of $\infEq$.  
\end{definition}

\begin{definition}\label{def:mpo} Given a program $\langle \Variables, \Constructors, \Functions, \Equations \rangle$ and 
  a precedence $\precEqFS$ over function symbols, the Product Path Ordering
 $\mpoSt$ is defined as the least ordering verifying rules given in Figure~\ref{fig:mpo}.
\end{definition} 

\begin{figure}
\hrule
\begin{gather*}
\ninfer{\termtwo = \termone_i \textit{\ or } \termtwo \mpoSt
  \termone_i}{\termtwo \mpoSt \funone(\ldots, \termone_i, \ldots)}{\funone \in
  \Functions \union \Constructors}
\\[5mm]
\ninfer{\forall i\ \termtwo_i \mpoSt
  \funone(\many{\termone}{n})}{\conone(\many{\termtwo}{m}) \mpoSt
  \funone(\many{\termone}{n})}{\funone \in \Functions, \conone \in
  \Constructors} \\[5mm]
\ninfer{\forall i\ \termtwo_i \mpoSt \funone(\many{\termone}{n})
  \qquad \funtwo \precFS \funone}{g(\many{\termtwo}{m}) \mpoSt
  \funone(\many{\termone}{n})}{\funone, \funtwo \in \Functions}
\\[5mm]
\ninfer{(\many{\termtwo}{n}) \mpoXSt
  (\many{\termone}{n}) \qquad \funone \egalFS \funtwo \qquad \forall
  i\ \termtwo_i \mpoSt
  \funone(\many{\termone}{n})}{\funtwo(\many{\termtwo}{n}) \mpoSt
  \funone(\many{\termone}{n})}{\funone, \funtwo \in \Functions}
\end{gather*}
\caption{Definition of $\mpoSt$} 
\label{fig:mpo}
\hrule
\end{figure}

\subsection{Characterizations in the confluent case}

\begin{theorem}[Bonfante, Cichon, Marion and Touzet~\cite{BonfanteCichonMarionTouzet01}]
Functions computed by programs with additive strict interpretation are exactly \Ptime \ functions.
\end{theorem}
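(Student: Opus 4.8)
The plan is to prove the two inclusions separately: that every function computed by a program with an additive strict interpretation lies in \Ptime, and conversely that every \Ptime\ function is computed by some such program.

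For the soundness direction ($\subseteq \Ptime$) I would first exploit additivity to control the size of the values manipulated during a computation. Since the constructors are additive, for any value $\termthree \in \Consterms$ the interpretation $\interp{\termthree}$ is linearly related to its size $\size{\termthree}$, \ie there are constants with $a\,\size{\termthree} \leq \interp{\termthree} \leq b\,\size{\termthree}$. Because the interpretation is a \MaxPoly\ function, $\interp{\main(\many{\termthree}{n})}$ is bounded by a polynomial $P$ in the total input size. Weak compatibility (which strict compatibility entails) together with weak monotonicity gives that $\interp{-}$ does not increase along reductions, and the weak sub-term property propagates this bound to every subterm; hence every value occurring anywhere in a reduction issued from $\main(\many{\termthree}{n})$ has size $O(P(n))$. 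In particular each individual rewrite step operates on polynomially sized terms and is therefore computable in polynomial time.

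The second and more delicate ingredient is to bound the number of rewrite steps. Here I would use strict compatibility: each step $\termone \to \termtwo$ strictly decreases the interpretation, $\interp{\termone} > \interp{\termtwo}$, while the weak sub-term property keeps the interpretation of the reduct below that of the redex even inside a context. Since all these values are bounded by $P(n)$ and form a strictly decreasing sequence, the derivation height $\derh{\main(\many{\termthree}{n})}$ should be polynomially bounded, \emph{provided} the strict decrease is by a bounded-below amount. This is exactly the point where one must be careful over the reals: the clean way is to observe that with additive constructors and \MaxPoly\ interpretations chosen with integer coefficients, every ground term receives a natural-number value, so $\interp{\termone} > \interp{\termtwo}$ forces $\interp{\termone} \geq \interp{\termtwo} + 1$ and the number of steps is at most $\interp{\main(\many{\termthree}{n})} \leq P(n)$. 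Combining the per-step cost with this step count yields a polynomial time bound. For the completeness direction ($\Ptime \subseteq$) I would instead simulate a polynomial-time Turing machine by a first-order program: configurations are encoded as tuples of words over the constructor algebra, a single transition is implemented by a fixed set of rules, and the whole run is driven by an explicit polynomial ``clock'' built from the input. The crux is then to equip this simulator with an additive strict interpretation in which the clock contributes the dominating, strictly decreasing term, so that strict compatibility holds for every rule while the interpretation of the top call stays polynomial in $n$.

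The main obstacle, to my mind, is the step-counting argument in the soundness direction. The size bound is essentially forced by additivity and the weak sub-term property, and the \Ptime-hardness simulation is standard; but turning the strict decrease of a real-valued interpretation into a genuine polynomial bound on the derivation height $\derh{-}$ is where the hypotheses must be used most carefully, either through integrality of the values on the constructor algebra or through an explicit gap argument. A secondary obstacle is the bookkeeping in the completeness direction, namely verifying $\interp{\ell} > \interp{r}$ simultaneously for every rule of the Turing-machine simulator while keeping the constructor interpretations additive.
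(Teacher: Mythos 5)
This paper does not reprove the statement---it imports it verbatim as Theorem 4 of~\cite{BonfanteCichonMarionTouzet01}---and your sketch is essentially the proof given in that reference: additivity plus the sub-term property and the non-increase of $\interp{-}$ along reductions bound all values polynomially, strict compatibility bounds the derivation height, and completeness is handled by a clocked Turing-machine simulation equipped with a suitable additive strict interpretation. The real-versus-integer subtlety you isolate is genuine (over $\bR$ a pointwise strict inequality $\interp{\ell} > \interp{r}$ need not yield a uniform gap), and your fix coincides with the setting of the cited source, whose interpretations are integer-valued, so each rewrite step decreases the interpretation by at least $1$ and the number of steps from $\main(\many{\termthree}{n})$ is at most $\interp{\main(\many{\termthree}{n})}$, which is polynomial in the input size.
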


It is Theorem 4 in~\cite{BonfanteCichonMarionTouzet01}, first item.

\begin{theorem}[Bonfante, Marion and Moyen~\cite{BMM07}]
Functions computed by programs with 
\begin{itemize}
\item an additive quasi-interpretation and 
\item a termination proof by \PPO
\end{itemize}
are exactly \Ptime \ functions.
\end{theorem}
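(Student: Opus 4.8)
The statement is an equivalence, so the plan is to prove two inclusions separately: that every such program computes a function in \Ptime\ (soundness), and that every \Ptime\ function is computed by some program with an additive quasi-interpretation and a \PPO\ termination proof (completeness). The soundness direction is the crux, and it naturally splits the work between the two ingredients of the hypothesis: the additive quasi-interpretation controls \emph{space} (the size of the data manipulated), while the \PPO\ termination proof controls \emph{time} (the shape and length of the recursion).

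First I would exploit the quasi-interpretation to obtain a size bound. From the weak sub-term property together with additivity of the constructors, one gets $\size{t} \leq \interp{t}$ (up to a constant factor) for every constructor term $t$. Weak monotonicity and weak compatibility then give $\interp{u} \leq \interp{t}$ whenever $t \reftransto u$, and chaining this along the call-tree relation $\too$ yields $\interp{g(\vec w)} \leq \interp{f(\vec v)}$ for every call $g(\vec w)$ reachable from the initial call $f(\vec v)$. Since the interpretations are Max-Poly and the constructors are additive, $\interp{f(\vec v)}$ is polynomial in the sizes of the $v_i$; hence every value occurring during the computation has size polynomial in the input size. This is the space half.

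Next I would use \PPO\ to organise the computation as a polynomial-time algorithm. Evaluation proceeds by a memoised traversal of the call-tree: one keeps a cache associating to each already-evaluated call $\langle g, \vec w\rangle$ its value, and computes a fresh call by a root rewrite step followed by recursive evaluation of the inner calls. As each value has polynomial size, each cache entry costs polynomial space and each isolated step costs polynomial time, so it suffices to show that only polynomially many \emph{distinct} calls are ever evaluated. This is exactly where the product status is essential: along a chain of calls $g(\vec w) \too^{+} g'(\vec{w}')$ staying within a single $\egalFS$-equivalence class, the fourth \PPO\ rule forces $\vec{w}' \infMulSt \vec w$, a componentwise descent in which at least one argument strictly decreases and none increases. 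Because all arguments have polynomially bounded size, such product-descending chains have polynomial length, whereas a lexicographic status would permit exponentially long chains — precisely the gap that later separates \Ptime\ from \Pspace. A well-founded induction over the finite precedence $\precFS$ then composes these per-class bounds into a polynomial bound on the total number of distinct calls, hence on the cache.

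For completeness I would simulate an arbitrary polynomial-time Turing machine: encode configurations as constructor terms over an additive signature, define a one-step transition function, and drive it by a polynomial clock, arranging every definition so that the constructors remain additive. The size of the simulation is then transparently an additive quasi-interpretation, and the computation is a simple iteration on the clock, for which a precedence placing the clock-driven function above the step function yields a \PPO\ proof with product status. The main obstacle is the polynomial bound on the number of distinct calls in the soundness proof: controlling the global size of the memoised call-dag — rather than merely the length of a single branch — is what genuinely requires the interplay of the product path ordering with the quasi-interpretation size bound, and making the induction over $\precFS$ compose these two bounds without blow-up is the delicate point.
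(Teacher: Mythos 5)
First, a point of reference: the paper does not prove this theorem at all — it is imported from Bonfante, Marion and Moyen~\cite{BMM07} (Theorem~48 there, on $\text{RPO}^\text{QI}_{Pro}$-programs) — so your proposal must be judged against that proof. Your overall architecture does match it: the quasi-interpretation (weak compatibility, weak sub-term property, additivity) yields a polynomial bound on the size of every value in the call tree; call-by-value with a cache (memoization, \`a la Jones) handles time; completeness is a clocked Turing-machine simulation. All of that is sound.

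The genuine gap is at the crux, and you half-name it yourself. Your per-class bound is a bound on the \emph{length} of product-descending chains, but the call tree branches: polynomial depth is perfectly compatible with exponentially many distinct calls, so chain length cannot feed the induction over $\precFS$. Nor can the size bound alone close the gap: there are exponentially many constructor terms of polynomial size, so ``all arguments have polynomial size'' does not bound the number of cache keys. The missing idea is a syntactic fact about \PPO: on ground constructor terms only the first rule of Figure~\ref{fig:mpo} ever applies, so $\mpoSt$ restricted to values is exactly the strict subterm relation; equivalently, since left-hand sides are constructor patterns, any recursive call to $g \egalFS f$ in a rule $f(p_1,\ldots,p_n) \to r$ must have as arguments componentwise \emph{subterms of the patterns} $p_i$. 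Hence every call in the same equivalence class reachable from an entry call $(g,w_1,\ldots,w_m)$ has its $i$th argument a subterm of $w_i$, and the number of \emph{distinct} such calls is at most $|w_1|\cdots|w_m|$, polynomial because the quasi-interpretation bounds each $|w_i|$. It is this per-class \emph{count} — not the chain length — that the induction over the finite precedence composes: entry arguments into each class are values of polynomially bounded size, each entry generates polynomially many distinct same-class calls, each of those spawns constantly many entries into strictly lower classes, and the precedence has constant depth, so the cache stays polynomial. You explicitly flag ``controlling the global size of the memoised call-dag'' as the main obstacle but leave it unresolved; with the subterm observation above substituted for your chain-length argument, your soundness proof goes through, and your completeness sketch is the standard one and is fine.
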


The programs of this latter theorem are mentioned as $\text{RPO}^\text{QI}_{Pro}$-programs in~\cite{BMM07}, Theorem~48.

\section{Constructor preserving interpretations}

It is well known that the interpretations above can be used to bound both the length of the computations and the size of terms during the computations (see for instance~\cite{HL88,Shkaravska09}). Here we show that interpretations also cope with syntactic constraints. More precisely,  programs with polynomial constructor preserving interpretations generalize cons-free programs~\cite{Jones99}.

\newcommand{\stc}{\subseteq} 
\newcommand{\sts}{\sousterme^*} 

Let us consider a signature $\Constructors$, the signature of constructors in the sequel. $S(\Constructors)$ denotes the set of \emph{finite non-empty sets} of terms in $\Terms{\Constructors}$.
Let $\sousterme$ denotes the sub-term relation on terms. On $S(\Constructors)$, we define $\sts$ as follows: $m \sts m'$ iff $\forall t \in m : \exists t' \in m' : t \sousterme t'$.  As an ordering on sets, for interpretations, we will use the inclusion relation. One may observe that $m \stc m' \Longrightarrow m \sts m'$.  

\begin{definition}\label{def:cons}
Let us consider a monotone interpretation $\interp{-}$ of a program $\langle \Variables, \Constructors, \Functions, \Equations \rangle$ over $(S(\Constructors),\subseteq)$. We say that it preserves constructors if 
\begin{enumerate}
\item 
for any constructor symbol $\conone \in \Constructors$, 
$\interp{\conone}(m_1, \ldots,m_k) = \{ \conone(t_1, \ldots, t_k) \mid t_i \in m_i, i = 1, \ldots, k\}.$
\item given a rule $f(\many{\patone}{n})  \to r$ and a ground substitution $\sigma$, for all $u \sousterme r$, $$\interp{\sigma(u)}\sts \interp{\sigma(f(\many{\patone}{n}))}\cup_{i = 1}^n \interp{\sigma(\patone_i)}.$$ 
\end{enumerate}
By extension, we say that a program is constructor preserving if it admits a constructor preserving monotone interpretation. 
\end{definition}

The fact that a program preserves constructors fixes the definition of the interpretation over constructors. Moreover, (1) below gives a simple characterization of the interpretations of constructor terms.

\begin{proposition}\label{prop:pb:constructors}
\label{pr:cons}For any constructor preserving interpretation $\interp{-}$, the following holds:
\begin{enumerate}
\item for any ground constructor term $t$, $\interp{t} = \{ t\}$,
\item given a ground substitution $\sigma$, for any constructor terms $u \sousterme v$,  $\interp{\sigma(u)} \sts \interp{\sigma(v)}$.
\end{enumerate}
\end{proposition}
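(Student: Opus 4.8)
The plan is to establish (1) by structural induction on the ground constructor term and then to derive (2) as an immediate consequence, using only the first clause of Definition~\ref{def:cons} (which pins down $\interp{-}$ on constructors) together with the compositional extension of the algebra to terms.

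For (1), I would argue by induction on the structure of the ground constructor term $t$. In the base case $t = \conone$ is a nullary constructor, and the first clause of Definition~\ref{def:cons}, read with $k=0$, gives $\interp{\conone} = \{\conone\} = \{t\}$. In the inductive case $t = \conone(\many{t}{k})$ with each $t_i$ a ground constructor term, the induction hypothesis yields $\interp{t_i} = \{t_i\}$, and unfolding the compositional extension of the algebra together with the first clause gives $\interp{t} = \interp{\conone}(\{t_1\}, \ldots, \{t_k\}) = \{\conone(\many{s}{k}) \mid s_i \in \{t_i\}\} = \{t\}$, since each $s_i$ is forced to equal $t_i$. This closes the induction.

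For (2), note first that since $\sigma$ is a ground substitution and $u, v$ are constructor terms, both $\sigma(u)$ and $\sigma(v)$ are ground constructor terms, so (1) applies and gives $\interp{\sigma(u)} = \{\sigma(u)\}$ and $\interp{\sigma(v)} = \{\sigma(v)\}$. Unfolding the definition of $\sts$ on these singletons, the desired $\{\sigma(u)\} \sts \{\sigma(v)\}$ is equivalent to $\sigma(u) \sousterme \sigma(v)$. It then remains to observe that the subterm relation is stable under substitution: writing $v = C[u]$ for a context $C$, applying $\sigma$ gives $\sigma(v) = \sigma(C)[\sigma(u)]$, and since $\sigma$ leaves the hole $\lozenge$ untouched, $\sigma(C)$ is again a context, whence $\sigma(u) \sousterme \sigma(v)$.

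I expect no serious obstacle here; the statement is essentially bookkeeping, and in particular (2) does not invoke the compatibility clause of Definition~\ref{def:cons} but only reduces to (1). The only points that need care are treating the nullary-constructor base case correctly (reading the set-builder of the first clause as an empty product, which yields the singleton $\{\conone\}$) and the stability of $\sousterme$ under a ground substitution, which is a standard property of first-order terms and is precisely what lets the whole argument collapse onto~(1).
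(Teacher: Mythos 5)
Your proof of (1) is correct and coincides with the paper's (structural induction on $t$). The gap is in (2), at its very first step: ``since $\sigma$ is a ground substitution and $u,v$ are constructor terms, both $\sigma(u)$ and $\sigma(v)$ are ground constructor terms.'' This is unwarranted. In this paper a ground substitution maps variables to arbitrary ground terms, i.e.\ elements of $\Terms{\Constructors\cup\Functions}$, which may contain function symbols; when the restricted notion is intended, the paper says so explicitly (``we restrict substitutions to ground constructor substitutions'' in the proof of Theorem~\ref{th:cp}). If some $\sigma(x)$ contains a function symbol, then $\sigma(u)$ is not a constructor term, item (1) does not apply to it, and $\interp{\sigma(u)}$ is merely some finite non-empty set of constructor terms, in general not a singleton; your reduction of (2) to (1) then collapses. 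This generality is not pedantic: Proposition~\ref{pr:cons}-(2) is invoked in the proof of Proposition~\ref{pr:cons-pres} exactly for such substitutions, namely the unrestricted ground substitutions $\sigma$ quantified over in item 2 of Definition~\ref{def:cons}, which the interpretation of a constructor-free program must be shown to satisfy.

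The paper's proof avoids this by inducting on the structure of $v$ as a constructor term \emph{with variables}, never inspecting what $\sigma$ does. If $v$ is a variable, $u \sousterme v$ forces $u = v$, and $\interp{\sigma(u)} \sts \interp{\sigma(v)}$ holds by reflexivity of $\sousterme$, whatever ground term $\sigma(v)$ is. If $v = \conone(v_1,\ldots,v_k)$ and $u \neq v$, then $u \sousterme v_j$ for some $j$; given $t \in \interp{\sigma(u)}$, the induction hypothesis yields $w_j \in \interp{\sigma(v_j)}$ with $t \sousterme w_j$, one picks arbitrary $w_i \in \interp{\sigma(v_i)}$ for $i \neq j$ (possible because interpretations take values in non-empty sets), and clause (1) of Definition~\ref{def:cons} gives $\conone(w_1,\ldots,w_k) \in \interp{\sigma(v)}$ with $t \sousterme \conone(w_1,\ldots,w_k)$. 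Your argument does become correct under the added hypothesis that $\sigma$ is a ground \emph{constructor} substitution, but that weaker statement would not support the proposition's later use.
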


\begin{proof}(1) is proved by induction on the structure of terms. (2) is by induction on the structure of $v$. Suppose $v$ is a variable, if $u \sousterme v$, then $u = v$ and the property holds trivially. Suppose $v = \conone(v_1, \ldots, v_k)$. The case $u = v$ is as above. Otherwise, $u \sousterme v_j$ for some $j\leq k$. In that case, for all $t \in \interp{\sigma(u)}$, by induction, there is a $w_j \in \interp{\sigma(v_j)}$ such that $t \sousterme w_j$. Let us choose some $w_i \in \interp{\sigma(v_i)}$ for all $i \neq j$. Then, $t \sousterme \conone(w_1, \ldots, w_k) \in \interp{\sigma(\conone(v_1, \ldots, v_k))}$.
\end{proof}

\begin{proposition}\label{prop:nf:cp}
Given a program $\langle \Variables, \Constructors, \Functions, \Equations \rangle$ and a constructor preserving monotone interpretation $\interp{-}$, for all constructor terms $t_1, \ldots, t_n$ and all symbols $f$ of arity $n$, if $\sem{f}(t_1, \ldots, t_n) = t$, then $t \in \interp{f(t_1, \ldots, t_n)}$. 
\end{proposition}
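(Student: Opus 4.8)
The plan is to exploit the fact that a monotone interpretation can only shrink along the rewriting relation (since the underlying order on $S(\Constructors)$ is inclusion), and then to pin down the interpretation of the computed value using Proposition~\ref{pr:cons}(1).

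First I would establish the auxiliary fact that a single rewrite step can only decrease the interpretation set: if $s \to s'$, then $\interp{s'} \subseteq \interp{s}$. Writing $s = C[\sigma(\ell)]$ and $s' = C[\sigma(r)]$ for a rule $\ell \to r$, a context $C$ and a ground substitution $\sigma$, weak compatibility (clause~(5) of Definition~\ref{def:interpretation}, which the monotone interpretation satisfies) gives the function inequality $\interp{\ell} \geq \interp{r}$; instantiating at $\sigma$ and recalling that $\geq$ here means $\supseteq$, this yields $\interp{\sigma(r)} \subseteq \interp{\sigma(\ell)}$. Weak monotonicity (clause~(2)) then lets me propagate this inclusion through the surrounding context $C$ by a routine induction on the structure of $C$, giving $\interp{s'} \subseteq \interp{s}$.

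Next, from $\sem{f}(t_1, \ldots, t_n) = t$ I have a reduction $f(t_1, \ldots, t_n) \reftransto t$. By induction on its length, the auxiliary fact chains up to $\interp{t} \subseteq \interp{f(t_1, \ldots, t_n)}$. Since $t$ is, by the definition of $\sem{f}$, a ground constructor term, Proposition~\ref{pr:cons}(1) applies and gives $\interp{t} = \{t\}$. Combining the two inclusions, $\{t\} \subseteq \interp{f(t_1, \ldots, t_n)}$, that is $t \in \interp{f(t_1, \ldots, t_n)}$, which is the claim. Note that conditions~(1) and~(2) of Definition~\ref{def:cons} enter only indirectly, through Proposition~\ref{pr:cons}(1); the weak monotone-interpretation properties carry the rest.

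The only delicate point is the auxiliary first step, namely that weak compatibility of the rules lifts to arbitrary substitution instances embedded in arbitrary contexts. This is the standard \emph{interpretations decrease along reduction} argument, so the sole thing to verify with care is that the rule inequality $\interp{\ell} \geq \interp{r}$ (an inequality between functions on $S(\Constructors)$) is preserved under instantiation by $\sigma$ and then under embedding in $C$ via weak monotonicity; everything else is a direct induction.
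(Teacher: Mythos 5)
Your proof is correct and follows essentially the same route as the paper: the paper's proof likewise observes that for a monotone interpretation a rewrite step can only shrink the interpretation set, chains this along $f(t_1,\ldots,t_n) \normto t$, and then invokes Proposition~\ref{pr:cons}(1) to get $\interp{t}=\{t\}$. The only difference is presentational: the paper states the one-step decrease $u \to v \Rightarrow \interp{v} \subseteq \interp{u}$ as a known fact about monotone interpretations, whereas you spell out its proof (compatibility under substitution, then monotonicity through the context), which is the standard argument.
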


\begin{proof}
For a monotone interpretation, if $u \to v$, then $\interp{v} \stc \interp{u}$. 
Suppose that $f(t_1, \ldots, t_n) \normto t$ with $t$ a constructor term, then, $\interp{t} \stc \interp{f(t_1, \ldots, t_n)}$. 
But, due to Proposition~\ref{prop:pb:constructors}-(1), $\interp{t} = \{t \}$. The conclusion follows.
\end{proof}

\begin{definition}In the present context, an interpretation is said to be polynomially bounded if for any symbol $f$, for any sets  $m_1, \ldots, m_n$,  the set $\interp{f}(m_1, \ldots, m_n)$ has a size polynomially bounded w.r.t. to the size of the $m_i$'s.  The size of a set $m$ is defined to be $|m| = \sum_{t \in m} |t|$.
\end{definition}

\begin{proposition}\label{prop:cons:cp}Given a constructor preserving program, then, for all constructors $\conone$, the size of the set $\interp{\conone}(m_1, \ldots, m_n)$ is polynomially bounded w.r.t. the size of the $m_i$'s.
\end{proposition}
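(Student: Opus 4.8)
The plan is to compute the size directly from the defining equation for $\interp{\conone}$ given by condition (1) of Definition~\ref{def:cons}. Let $n$ be the arity of $\conone$; by hypothesis $\interp{\conone}(m_1, \ldots, m_n) = \{ \conone(t_1, \ldots, t_n) \mid t_i \in m_i \}$, so every element of this set has the form $\conone(t_1, \ldots, t_n)$ with $|\conone(t_1, \ldots, t_n)| = 1 + \sum_{j=1}^n |t_j|$. Each such term is the image of the tuple $(t_1,\ldots,t_n)$ under the constructor, and since any element of the set is hit by at least one tuple of the Cartesian product $m_1 \times \cdots \times m_n$, an over-count can only help an upper bound. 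I would therefore start from
\[
|\interp{\conone}(m_1, \ldots, m_n)| \leq \sum_{(t_1, \ldots, t_n) \in m_1 \times \cdots \times m_n} \Big(1 + \sum_{j=1}^n |t_j|\Big).
\]

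Next I would split this sum into its constant part and its $n$ linear parts. The constant part simply counts the tuples, giving $\prod_{i=1}^n \#m_i$, where $\#m_i$ denotes the cardinality of $m_i$. For the $j$-th linear part, fixing the coordinates $t_i$ with $i \neq j$ and summing $|t_j|$ over $t_j \in m_j$ yields exactly $|m_j|$, the size of $m_j$; summing over the remaining coordinates then multiplies this by $\prod_{i \neq j} \#m_i$. Hence the right-hand side equals $\prod_{i=1}^n \#m_i + \sum_{j=1}^n \big(\prod_{i \neq j} \#m_i\big)\,|m_j|$.

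The one observation that turns this into a polynomial bound is that the cardinality of a finite set of terms is dominated by its size: since every term has size at least $1$, we have $\#m_i \leq |m_i|$ for each $i$. Substituting this into the expression above bounds each of the $n+1$ summands by $\prod_{i=1}^n |m_i|$, so that $|\interp{\conone}(m_1, \ldots, m_n)| \leq (n+1)\prod_{i=1}^n |m_i| \leq (n+1)\big(\sum_{i=1}^n |m_i|\big)^n$. As $n$ is the fixed arity of $\conone$, this is a polynomial in the sizes $|m_1|, \ldots, |m_n|$, which is precisely the claim.

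This argument is bookkeeping rather than a conceptual step, so I do not expect a serious obstacle; the only point deserving care is keeping the distinction between the cardinality $\#m_i$ (number of terms) and the size $|m_i| = \sum_{t \in m_i} |t|$ straight, and invoking $\#m_i \leq |m_i|$ exactly where the cardinality factors appear. It is worth emphasising that the resulting bound is non-linear, of degree equal to the arity of $\conone$, which is consistent with the fact that the development only aims at \emph{polynomial} bounds on the sizes of interpretations.
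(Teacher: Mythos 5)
Your proof is correct and follows essentially the same route as the paper's: both bound the size of the set by summing $|\conone(t_1,\ldots,t_n)| = 1 + \sum_j |t_j|$ over the tuples of $m_1 \times \cdots \times m_n$ and then use $\#m_i \leq |m_i|$ to count the tuples, ending with a degree-$n$ polynomial bound. Your bookkeeping is in fact slightly sharper (you get $(n+1)\prod_i |m_i|$ where the paper settles for the rougher $M^n(nM+1)$ with $M = \sum_i |m_i|$), but this is the same argument.
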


\begin{proof} Let us write $M = \sum_{i = 1}^n |m_i|$. Then,
\begin{eqnarray*}
|\interp{\conone}(m_1, \ldots, m_n)| &\leq& \sum_{i \leq n, t_i \in m_i} |\conone(t_1, \ldots, t_n)|\\
&\leq& \sum_{i \leq n, t_i \in m_i} n \times M + 1 \qquad \text{ since } |t_i| \leq |m_i| \leq M\\
&\leq& M^n \times (n \times M+1)\qquad\text{ by a rough enumeration of }\\
&&\phantom{M^n \times (n \times M+1)}\qquad \text{ the indices of the sum.} 
\end{eqnarray*}
\end{proof}

\begin{example}Let us come back to Example~\ref{booleanOp}, it has a polynomially bounded constructor preserving monotone interpretation. Apart from the generic interpretation on constructors, we define:
\begin{eqnarray*}
\interp{=}(m,m')   = \interp{\member}(m,m') &=& \{ \true, \false\}\\
\interp{\fand}(m,m') = \interp{\ffor}(m,m') &=& m' \cup \{\true, \false\}\\
\interp{\iif \tthen \eelse\!\!\!\!}(m_b,m_y,m_z) &=& m_y \cup m_z
\end{eqnarray*}
It is clear that this interpretation is polynomially bounded. 
\end{example}

Actually, the notion of constructor preserving programs generalizes the notion of constructor-free programs as introduced by Jones (see for instance~\cite{Jones99}). He has shown how constructor-free programs characterize \Ptime \ and \Logspace. We recall  that a program is constructor-free whenever, for any rule $f(\many{\patone}{n}) \to r$, for any subterm $t \sousterme r$, 
\begin{itemize}
\item if $t$ is a constructor term, then $t \sousterme f(\many{\patone}{n})$, 
\item otherwise, the root of $t$ is not a constructor symbol.
\end{itemize}

\begin{proposition}
\label{pr:cons-pres}
Any \emph{constructor-free} program $\langle \Variables, \Constructors, \Functions, \Equations \rangle$  has  a polynomially bounded constructor preserving monotone interpretation $\interp{-}$. 
\end{proposition}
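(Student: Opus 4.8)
The plan is to exhibit an explicit interpretation. The values on constructors are forced by condition~(1) of Definition~\ref{def:cons}, so the only freedom is the choice of $\interp{f}$ for function symbols $f \in \Functions$. Since a constructor-free program never builds fresh data --- every constructor term it manipulates is already a subterm of its arguments --- the natural candidate is to let $\interp{f}$ return the set of all subterms of its arguments. Concretely, for $f$ of arity $n$ I would set
$$\interp{f}(m_1, \ldots, m_n) = \{ u \in \Terms{\Constructors} : u \sousterme t \text{ for some } t \in m_1 \cup \cdots \cup m_n\},$$
the subterm closure of $m_1 \cup \cdots \cup m_n$. First I would check that this is a legitimate monotone algebra on $(S(\Constructors),\subseteq)$: the result is a finite, non-empty set of ground constructor terms (non-emptiness being the only delicate point, which holds as soon as the $m_i$ are non-empty, hence for every symbol of arity $\geq 1$), and subterm closure is obviously monotone for $\subseteq$.

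The core of the argument is a single lemma, proved by structural induction on the subterms $u$ of a right-hand side: for every rule $f(\many{\patone}{n}) \to r$, every subterm $u \sousterme r$, and every valuation $\rho$, each element of $\interp{u}_\rho$ is a subterm of some element of $\bigcup_{i=1}^n \interp{\patone_i}_\rho$. Here I would use the constructor-free hypothesis to split on the shape of $u$. If $u$ is a constructor term, then $u \sousterme f(\many{\patone}{n})$; since $u$ cannot equal the (function-rooted) left-hand side, $u \sousterme \patone_i$ for some $i$, and one checks directly from the constructor clause~(1) that each element of $\interp{u}_\rho$ embeds, at the position of $u$, into some element of $\interp{\patone_i}_\rho$ (the remaining variable occurrences of $\patone_i$ being filled by arbitrary elements of the corresponding $\rho$-values, which exist as the sets are non-empty); the variable case $u = x$ is the instance where $x$ occurs in $\patone_i$. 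Otherwise $u$ is not a constructor term, so by constructor-freeness its root is a function symbol, $u = g(u_1, \ldots, u_k)$; then $\interp{u}_\rho$ is the subterm closure of $\bigcup_j \interp{u_j}_\rho$, the induction hypothesis applies to each $u_j \sousterme r$, and closing under subterm preserves ``being a subterm of some element of $\bigcup_i \interp{\patone_i}_\rho$''.

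This lemma settles everything at once. Taking $u = r$ gives $\interp{r}_\rho \sts \bigcup_i \interp{\patone_i}_\rho$, which is condition~(2) of Definition~\ref{def:cons}; and since $\interp{f(\many{\patone}{n})}_\rho$ is by construction exactly the subterm closure of $\bigcup_i \interp{\patone_i}_\rho$, the lemma in fact shows $\interp{u}_\rho \subseteq \interp{f(\many{\patone}{n})}_\rho$ for all $u \sousterme r$, and in particular $\interp{r}_\rho \subseteq \interp{\ell}_\rho$, i.e.\ weak compatibility. Together with weak monotonicity this makes $\interp{-}$ a constructor preserving monotone interpretation. Finally, polynomial boundedness is immediate: writing $M = \sum_i |m_i|$, the union $m_1 \cup \cdots \cup m_n$ has at most $M$ distinct subterms, each of size at most $M$, so $|\interp{f}(m_1, \ldots, m_n)| \leq M^2$.

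I expect the main obstacle to be the constructor-term case of the lemma: one must argue carefully --- using clause~(1), which fills each variable occurrence independently --- that a syntactic subterm relation $u \sousterme \patone_i$ lifts to the ``subterm of some element'' relation between $\interp{u}_\rho$ and $\interp{\patone_i}_\rho$ for arbitrary valuations, not merely ground substitutions. The remaining bookkeeping (non-emptiness for nullary symbols, and verifying that the constructor-free constraint really yields $u \sousterme \patone_i$ rather than only $u \sousterme \ell$) is routine.
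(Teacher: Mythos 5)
Your proposal is correct and follows essentially the same route as the paper: the identical subterm-closure interpretation $\interp{f}(m_1,\ldots,m_n) = \{u \mid \exists i, \exists t \in m_i : u \sousterme t\}$, verified to be monotone, constructor preserving and polynomially bounded via the same induction on subterms $u$ of right-hand sides, with the same case split (constructor-freeness forcing $u \sousterme \patone_i$ when $u$ is a constructor term, and the induction hypothesis plus subterm-closedness handling function-rooted $u$), and the same strengthened conclusion $\interp{u} \subseteq \interp{\ell}$ yielding both condition (2) and weak compatibility. The only differences are cosmetic: you phrase the lemma for arbitrary valuations where the paper uses ground substitutions, and your bound $M^2$ is slightly sharper than the paper's $n \times M^3$.
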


\begin{proof}We use the generic definition for constructor symbols. For functions, let $$\interp{f}(m_1,\ldots,m_n) = \{ u \mid \exists i \leq n, t \in m_i : u \sousterme t\}.$$

We have to prove a) that it is a monotone interpretation over $S(\Constructors)$, b) that it preserves constructors and c) that it is polynomially bounded. 

\paragraph{Proof of c).} Due to Proposition~\ref{prop:pb:constructors}, it is sufficient to verify the size condition on function symbols. Given some sets $m_1, \ldots, m_n$, we define $m = \{ u \mid \exists i \leq n, t \in m_i : u \sousterme t\} $ and $K =  |\cup_{i = 1}^n m_i|$. From the definition of the size of a set, for all $j \leq n$, $|m_j| \leq K \leq \sum_{i = 1}^n |m_i|$. Let $S_{m_1, \ldots, m_n} = \{ t \mid \exists i \leq n : t \in m_i\}$.  Then, 
\begin{eqnarray}
\# S_{m_1, \ldots, m_n} \leq \sum_{i = 1}^n \# m_i \leq n \times K\label{eq2}
\end{eqnarray}
where $\#m$ denotes the cardinality of a set $m$ (recall that for any set $m$ : $\#m \leq |m|$!). 
For each term $t$, let $D_t = \{ u \mid u \sousterme t\}$. Since  $m = \cup_{t \in S_{m_1,\ldots, m_n}} D_t$,  
$|m| \leq \sum_{t \in S_{m_1, \ldots, m_n}} |D_t|$.
It is clear that for all $t$, $\# D_t = |t|$. Moreover, each $u \sousterme t$ has a size smaller than $t$. Consequently, $|D_t| \leq |t|^2$. Since for all terms $t \in S_{m_1, \ldots, m_n}$, $|t| \leq K$, we have $|m| \leq \sum_{t \in S_{m_1, \ldots, m_n}} K^2$. Combining this latter equation with Equation~\ref{eq2}, we can state that $|m| \leq n \times K^3 \leq n \times (\sum_{i = 1}^n |m_i|)^3$.

\paragraph{Proof of b).} Let us come back to the  Definition~\ref{def:cons}. The first item comes from our generic choice for the interpretations of constructor symbols. Concerning the second item, let us consider a rule $f(\many{\patone}{n}) \to r$, a ground substitution $\sigma$ and  a subterm $u \sousterme r$.  We prove actually a stronger fact than condition 2, namely: $\interp{\sigma(u)} \stc \interp{\sigma(f(\many{\patone}{n}))}$. By induction on $u$.

 If $u$ is a variable or more generally a constructor term, since the program is constructor-free,  $u \sousterme \patone_j$ for some $j$. Take $t \in \interp{\sigma(u)}$. Due to Proposition~\ref{prop:pb:constructors}-(2),  there is $t' \in \interp{\sigma(\patone_j)}$ such that $t \sousterme t'$.  Then, recalling the definition of $\interp{f}$, $t$ belongs to $\interp{f}(\interp{\sigma(\patone_1)},\ldots, \interp{\sigma(\patone_n)}) = \interp{\sigma(f(\many{\patone}{n}))}$. 
 
 Otherwise, $u = g(v_1, \ldots, v_k)$ and, since the program is constructor-free, $g$ is a function symbol. Take $t \in \interp{\sigma(u)} = \interp{g}(\interp{\sigma(v_1)}, \ldots, \interp{\sigma(v_k)})$. Recall that $g$ is a function symbol. Then, by definition of $\interp{g}$, there is a $j\leq k$ and a term $t' \in \interp{\sigma(v_j)}$ such that  $t \sousterme t'$. By induction, $t' \in \interp{\sigma(f(\many{\patone}{n}))}$. But then, by definition of $\interp{f}$, $t \in \interp{f(\many{\patone}{n})}$.

\paragraph{Proof of a),} Item (2) of Definition~\ref{def:interpretation} is a direct consequence of the definition of the interpretation. Let us justify now (5).  As seen above,  for all rules $f(\many{\patone}{n}) \to r$, ground substitutions $\sigma$ and  subterms $u \sousterme r$, $\interp{\sigma(u)} \stc \interp{\sigma(f(\many{\patone}{n})}$. In particular, the result holds for $r$.
  \end{proof}

Do those kind of interpretations really go beyond constructor-freeness? Here is an example of a program which is not constructor-free, but with a constructor preserving interpretation.

\begin{example}Using the tally numbers $\zero, \suc$, and lists, the function $\tt f$ builds the list of the first $n-1$ integers given the argument $n$.
\begin{eqnarray*}
{\tt f}(\zero) &=&\nnill\\
{\tt f}(\suc(n)) &=&\cons(n,{\tt f}(n))
\end{eqnarray*}

Such a program has a constructor preserving interpretation. Let 
\begin{eqnarray*}
\interp{f}(m) &= &\{ \cons(n_1,\cons(n_2,\cdots(\cons(n_k,\nnill))\cdots)) \mid \\ 
&&\qquad \exists n_1, \ldots, n_k \in m : \forall j \leq k-1 : n_j  = \suc(n_{j+1}) \} \\
&&\cup \{\nnill\}.
\end{eqnarray*}

It is clear that this program is not constructor-free. But, there is a stronger difference:  the function computed by this program cannot be computed by \emph{any} constructor-free program. Indeed,
recall that the output of functions computed by constructor-free programs are subterms of the inputs. Since this is not the case of $\tt f$, the conclusion follows.

Let us make one last observation about the example. Actually, the interpretation is polynomially bounded. Indeed, a list of the shape $$\cons(n_1,\cons(n_2,\cdots(\cons(n_k,\nnill))\cdots))$$ is fixed by the choice of $n_1$ and $k$. Since $k \leq |n_1| \leq |m|$, there are at most $|m|^2$ such lists, each of which has a polynomial size.  
\end{example}

\begin{theorem}\label{th:cp}
Predicates computed by programs with a polynomially bounded constructor preserving interpretation are exactly \Ptime \ predicates.
\end{theorem}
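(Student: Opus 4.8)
The plan is to prove the two inclusions separately: soundness (every such predicate is in $\Ptime$) and completeness (every $\Ptime$ predicate is computed by such a program). Completeness is immediate from the results already recalled: by Jones~\cite{Jones99} every $\Ptime$ predicate is computed by a constructor-free program, and by Proposition~\ref{pr:cons-pres} every constructor-free program admits a polynomially bounded constructor preserving monotone interpretation. Hence it suffices to concentrate on soundness, which is the real content.

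For soundness, fix a program with a polynomially bounded constructor preserving interpretation $\interp{-}$, let $f$ be the main symbol, and fix inputs $t_1, \ldots, t_n \in \Consterms$. The crux is to show that only polynomially many distinct function calls can occur during the evaluation of $f(t_1, \ldots, t_n)$. Set $W = \interp{f(t_1, \ldots, t_n)} \cup \{t_1, \ldots, t_n\}$, a polynomially bounded set of terms by hypothesis. I claim the invariant that, for every node $(g, u_1, \ldots, u_m)$ reachable from $(f, t_1, \ldots, t_n)$ along the call-tree relation $\too$, one has $\interp{g(u_1, \ldots, u_m)} \sts W$ and $u_j \sts W$ for each $j$.

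I would prove this by induction on the $\too$-distance from the root, using Definition~\ref{def:cons}(2) and the transitivity of $\sts$ (together with the fact, noted in the excerpt, that $m \stc m' \Rightarrow m \sts m'$). When a node $(g, \vec u)$ fires a rule $g(\many{p}{m}) \to r$ under a ground substitution $\sigma$ with $\sigma(p_i) = u_i$, every child arises from a subterm $h(\vec s) \sousterme r$, and Definition~\ref{def:cons}(2) gives $\interp{\sigma(h(\vec s))} \sts \interp{g(\vec u)} \cup \bigcup_i \{u_i\}$; the induction hypothesis and $\sts$-transitivity yield $\interp{\sigma(h(\vec s))} \sts W$. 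Monotonicity of the interpretation under reduction ($u \to v \Rightarrow \interp{v} \stc \interp{u}$, as used in Proposition~\ref{prop:nf:cp}) transfers this bound to the evaluated child $(h, \vec w)$, and the same argument applied to each argument subterm $s_j$ gives $w_j \sts W$ (recall $\interp{w_j} = \{w_j\}$ by Proposition~\ref{prop:pb:constructors}(1)). Consequently every argument occurring in a call is a subterm of an element of $W$; as $W$ is polynomially bounded there are only polynomially many such subterms, each of polynomial size, hence only polynomially many distinct calls $(g, \vec u)$. I would then evaluate $f(t_1, \ldots, t_n)$ by memoisation over this polynomial set of calls: each distinct call is computed at most once, and since the program is confluent and terminates on the input (it computes a predicate), no call is its own ancestor, so the recursion is well-founded and the per-call work (rule selection, matching, and evaluation of a fixed right-hand side with table look-ups for recursive calls) is polynomial. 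Reading off whether the resulting normal form is $\true$ or $\false$ decides the predicate, and by Proposition~\ref{prop:nf:cp} this normal form indeed lies in $\interp{f(t_1, \ldots, t_n)}$.

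The main obstacle is the soundness invariant together with the passage from "polynomially many distinct calls" to a genuinely polynomial-time algorithm: one must organise the memoised evaluation so that the cache bookkeeping, the well-foundedness of the recursion, and the handling of the intermediate non-constructor terms created \emph{inside} a single rule application all remain polynomial. The subterm bound controls the constructor values that can ever appear as arguments, but one still has to check that the finitely many right-hand sides and their contexts contribute only a polynomial overhead per step.
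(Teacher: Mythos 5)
Your proof is correct and takes essentially the same route as the paper's: completeness by composing Jones's result with Proposition~\ref{pr:cons-pres}, and soundness by establishing, by induction along the call-tree relation $\too$, the invariant that every call $(g,u_1,\ldots,u_m)$ satisfies $\interp{g(u_1,\ldots,u_m)} \sts \interp{f(t_1,\ldots,t_n)} \cup \{t_1,\ldots,t_n\}$ with each $u_j$ a subterm of an element of this polynomially bounded set, then concluding by call-by-value evaluation with a polynomial-size memoization cache. The only difference is organisational: the paper factors the one-step case of your induction into a separate statement (Lemma~\ref{lem:sousterme}), which you argue inline.
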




\begin{proof}
From Jones's result and Proposition~\ref{pr:cons-pres}, it is clear that \Ptime \ predicates can be computed by constructor preserving programs. 

In the other direction, suppose we want to evaluate $f(t_1, \ldots, t_n)$ where $t_1, \ldots, t_n$ are some constructor terms. First, due to Propostion~\ref{prop:nf:cp}, one observes that the set of constructor terms $\interp{f(t_1, \ldots, t_n)} = \interp{f}(\interp{t_1}, \ldots, \interp{t_n})$ contains the normal form of $f(t_1, \ldots, t_n)$.  Moreover, this set has a polynomial size w.r.t. the size of $\interp{t_i}$'s. Due to Proposition~\ref{pr:cons}, $\interp{t_i} = \{t_i \}$ and consequently $|\interp{t_i}| = |t_i|$. That is $\interp{f(t_1, \ldots, t_n)}$ has a polynomial size w.r.t. the size of inputs.

As this is done by Jones, we use a call-by-value semantics with cache, that is:
\begin{itemize}
\item  we restrict substitutions to ground constructor substitutions,
\item each time a term $g(u_1, \ldots, u_m)$ is evaluated, it is put in a map $(g,u_1,\ldots, u_m) \mapsto \sem{g}(u_1, \ldots, u_m)$. This map is called the cache.
\end{itemize}
The key point to prove that computations can be done in polynomial time is to show that the cache has a polynomial size w.r.t. the size of the inputs. We begin to establish that for all constructor term $u_i$ such that $f(t_1, \ldots, t_n) \transto C[g(u_1, \ldots, u_m)]$: 
\begin{itemize}
\item there is a term $t \in \interp{f(t_1, \ldots, t_n)} \cup \{ t_1, \ldots, t_n\}$ such that $u_i \sousterme t$,
\item $\interp{g(u_1, \ldots, u_m)} \sts \interp{f(t_1, \ldots, t_n)} \cup \{ t_1, \ldots, t_n\}$.
\end{itemize}

One will have noticed that $\{ t_1, \ldots, t_n\} = \cup_{i = 1}^n \interp{t_i}$, so that $ \interp{f(t_1, \ldots, t_n)} \cup \{ t_1, \ldots, t_n\} = \interp{f(t_1, \ldots, t_n)} \cup_{i = 1}^n  \interp{t_i}$. Second remark, terms like $f(t_1, \ldots, t_n)$, $g(u_1, \ldots, u_m)$ as above correspond to nodes in the call tree with $(f,t_1, \ldots, t_n) \too^+ (g, u_1, \ldots, u_m)$.
So, we work by induction on $\too^+$. 

\paragraph{Base case.}
Suppose that $(f,t_1, \ldots, t_n) \too (g, u_1, \ldots, u_m)$. In other words, there is a context $C$ such that:
$$f(t_1, \ldots, t_n) \to  C[g(v_1, \ldots, v_m)] \reftransto C[g(u_1, \ldots, u_m)].$$  By Lemma~\ref{lem:sousterme} below, $u_i \sousterme t$ for some term $t \in  \interp{f(t_1, \ldots, t_n)} \cup \{ t_1, \ldots, t_n\}$ as required. 

For the second item, notice that $g(v_1, \ldots, v_m) \sousterme C[g(v_1, \ldots, v_m)]$. Then, 
$$\begin{array}{rll}
\interp{g(u_1, \ldots u_m)} &\stc \interp{g(v_1, \ldots, v_m)}&\text{ since }g(v_1, \ldots, v_m) \reftransto g(u_1, \ldots, u_m)\\
&\sts \interp{f(t_1, \ldots, t_n)} \cup \{ t_1, \ldots, t_n\}&\text{ by Definition~\ref{def:cons}, second item}
\end{array}$$

\paragraph{Induction step.}
 Otherwise, $(f,t_1, \ldots, t_n) \too^+ (g,u_1, \ldots, u_m) \too (h,w_1, \ldots, w_k)$. By induction, we have $\interp{g(u_1, \ldots, u_m)} \sts \interp{f(t_1, \ldots, t_n)} \cup \{ t_1, \ldots, t_n\}$ and for all $i \leq m$, $\{ u_i \} = \interp{u_i} \sts \interp{f(t_1, \ldots, t_n)} \cup \{ t_1, \ldots, t_n\}$.  Consequently, 
 \begin{equation}
 \interp{g(u_1, \ldots, u_m)} \cup_{i=1}^m \interp{u_i} \sts \interp{f(t_1, \ldots, t_n)} \cup \{ t_1, \ldots, t_n\}.\label{eq23}
 \end{equation}
  
 By Lemma~\ref{lem:sousterme}, for all $w_i$, there is a term $v \in \interp{g(u_1, \ldots, u_m)} \cup_{i = 1}^m \interp{u_i}$. By Equation~\ref{eq23}, there is a term $t \in  \interp{f(t_1, \ldots, t_n)} \cup \{ t_1, \ldots, t_n\}$ such that $w_i \sousterme t$. 
 
 For the second item, $$h(w_1, \ldots, w_k) \sts \interp{g(u_1, \ldots, u_m)} \cup_{i = 1}^m \interp{u_i} \sts \interp{f(t_1, \ldots, t_n)} \cup \{ t_1, \ldots, t_n\}.$$
 where the first relation is due to Definition~\ref{def:cons}-(2).

After this preliminary work, we are ready to bound the size of the cache. As a consequence of what precedes, the arguments of all the calls $g(u_1, \ldots, u_m)$ in the call tree are contained in the set $$S = \{u  \mid \exists t \in \interp{f(t_1, \ldots, t_n)} \cup \{ t_1, \ldots, t_n\} : u \sousterme t\}.$$ 
Since $\interp{f(t_1, \ldots, t_n)} \cup \{t_1,\ldots, t_n\}$ has a polynomial size, $S$ itself has cardinality bounded by a polynomial,  say $P(|t_1|, \ldots, |t_n|)$. As a consequence, since the $u_i$'s are in $S$,  the cache has at most $|\Functions| \times P(|t_1|, \ldots, |t_n|)^D$ entries, where  $D$ is a bound on the arity of symbols. Since each elements $u_i$ and each normal form of $g(u_1, \ldots, u_m)$ are subterms of $\interp{f(t_1, \ldots, t_n)} \cup \{t_1,\ldots, t_n\}$, they have a polynomial size. Then, the cache itself has a polynomial size.
 \end{proof}

\begin{lemma}\label{lem:sousterme} Let $\langle \Variables, \Constructors, \Functions, \Equations \rangle$ be a program with a polynomially bounded constructor preserving interpretation $\interp{-}$. Suppose given a rewriting step $f(t_1, \ldots, t_n) \to w$ with $t_1, \ldots, t_n$ some constructor terms and $v \sousterme w$. If $v \normto u$ with $u$ a constructor term, then there is a term $t \in \interp{f(t_1, \ldots, t_n)} \cup_{i = 1}^n \interp{t_i}$ such that $u\sousterme t$. 
\end{lemma}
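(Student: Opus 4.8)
The plan is to first reduce to a root rewrite. Since $t_1, \ldots, t_n$ are constructor terms, they contain no function symbols, so the only redex in $f(t_1, \ldots, t_n)$ sits at the root. Hence the step $f(t_1, \ldots, t_n) \to w$ uses a rule $f(\many{\patone}{n}) \to r$ together with a ground constructor substitution $\sigma$ satisfying $\sigma(\patone_i) = t_i$ for all $i$, and $w = \sigma(r)$. I would then split on the shape of the subterm $v \sousterme \sigma(r)$: either $v = \sigma(u')$ for some subterm $u' \sousterme r$, or $v$ lies strictly inside a substituted variable, i.e. $v \sousterme \sigma(x)$ for some variable $x$ occurring in $r$.

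In the second case $v$ is a constructor term: every variable of $r$ occurs in the left-hand side, so $x$ occurs in some $\patone_j$ and thus $v \sousterme \sigma(x) \sousterme \sigma(\patone_j) = t_j$. Being a ground constructor term, $v$ is already a normal form, so $u = v$ and $u \sousterme t_j$. Using Proposition~\ref{prop:pb:constructors}-(1), $\interp{t_j} = \{t_j\}$, so taking $t = t_j \in \interp{t_j}$ gives $u \sousterme t$, as required.

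The main case is $v = \sigma(u')$ with $u' \sousterme r$, where I would chain three facts. First, since $v = \sigma(u') \reftransto u$ and the interpretation is monotone, a reduction can only shrink the interpretation set (the argument used in the proof of Proposition~\ref{prop:nf:cp}: $a \to b$ implies $\interp{b} \stc \interp{a}$), whence $\interp{u} \stc \interp{\sigma(u')}$. Second, $u$ is a constructor term, so $\interp{u} = \{u\}$ by Proposition~\ref{prop:pb:constructors}-(1); combining the two, $u \in \interp{\sigma(u')}$. Third, Definition~\ref{def:cons}-(2) applied to the rule $f(\many{\patone}{n}) \to r$, the substitution $\sigma$ and the subterm $u' \sousterme r$ yields $\interp{\sigma(u')} \sts \interp{f(t_1, \ldots, t_n)} \cup_{i=1}^n \interp{t_i}$, using $\sigma(\patone_i) = t_i$ and $\sigma(f(\many{\patone}{n})) = f(t_1, \ldots, t_n)$. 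By the very definition of $\sts$, the membership $u \in \interp{\sigma(u')}$ then produces a term $t \in \interp{f(t_1, \ldots, t_n)} \cup_{i=1}^n \interp{t_i}$ with $u \sousterme t$.

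I expect the only delicate point to be the subterm decomposition of $\sigma(r)$ and the observation that the "inside a variable" case automatically lands inside one of the inputs $t_j$; everything else is a direct composition of the monotonicity property, Proposition~\ref{prop:pb:constructors} and Definition~\ref{def:cons}-(2), so no genuine calculation is involved.
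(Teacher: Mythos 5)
Your proof is correct and takes essentially the same route as the paper's: the same decomposition of $v \sousterme \sigma(r)$ into the case where $v$ sits inside a substituted variable (hence is a ground constructor term, already normal, and a subterm of some $t_j$) and the case $v = \sigma(u')$ with $u' \sousterme r$, the latter handled by monotonicity of the interpretation, Proposition~\ref{prop:pb:constructors}-(1), and Definition~\ref{def:cons}-(2). Your write-up is in fact slightly more explicit than the paper's (the root-redex observation and the unfolding of the definition of $\sts$), but the argument is identical.
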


\begin{proof}
Let $f(\many{\patone}{n}) \to r$ and $\sigma$ be such that $f(t_1, \ldots, t_n)  = \sigma(f(\many{\patone}{n})) \to  \sigma(r) = w$. 
There are two cases: if $v \sousterme \sigma(x)$ for some variable $x \in \patone_j$. Since the $t_i$ are constructor terms, $v$ is necessarily a constructor term, and consequently a normal form. So, $v = u$. As a matter of fact, $v \sousterme \sigma(\patone_j) = t_j$. We conclude taking $t = t_i$.  

Otherwise, $v = \sigma(v')$ for some $v' \sousterme r$. Since $v \normto u$, we have $\interp{u} \stc \interp{v}$. By Proposition~\ref{pr:cons}-(1), $u \in \interp{v}$. Due to Definition~\ref{def:cons}, second item, there is $t \in  \interp{f(t_1, \ldots, t_n)} \cup_{i = 1}^n \interp{t_i}$ such that $u\sousterme t$. 
\end{proof}

\section{Observation by non confluence}
\label{sec:4}

\subsection{Semantics}

We first have to define what we mean when we say that a function is
computed by a non-confluent rewriting system. Computations lead to 
several normal forms, depending on the reductions applied.
At first sight, we shall regard a non-confluent rewrite system as a non-deterministic 
algorithm. 

In this section, given a program, we do not suppose its underlying rewriting system to be confluent. By extension, we say that such programs are not confluent (even if they may be so). 

\newcommand{\bd}{\mbox{$\mathbb{D}$}}

\begin{example}
\label{sat}
A 3-SAT formula is given by a set of clauses, written $\vee(x_1,x_2,x_3)$ where the $x_i$ have either the shape $\neg(n_i)$ or $\ee(n_i)$.\footnote{ $\ee$ corresponds to a positive occurrence of a variable. It is introduced for a question of uniformity.} The $n_i$'s which are the identifiers of the variables are written in binary, with unary constructors $0, 1$ and the constant $\epsilon$. To simplify the program, we suppose all identifiers to have the same length. $\vrai, \faux$ represent the boolean values {\em true}
and {\em false}. $\vee$ serves for the disjunction. Since we focus on 3-SAT formulae, we take it to be a ternary function. 
For instance the formula $(x_1 \vee x_2 \vee \overline{x_3}) \wedge (x_1 \vee \overline{x_2} \vee \overline{x_1})$ is represented as:
$$\begin{array}{l}
\cons(\vee(\ee(0(1(\epsilon))),\ee(1(0(\epsilon))),\neg(1(1(\epsilon)))),\\
\phantom{\cons(}\cons(\vee(\ee(0(1(\epsilon))),\neg(1(0(\epsilon))),\neg(0(1(\epsilon)))),\nnill)).
\end{array}$$

\vspace{2ex}
Recalling rules given in Example~\ref{booleanOp}, the following program computes the satisfiability of a formula. Let us suppose  that $\ell$ denotes the list of variables with the valuation "true", we have:

$$
\begin{array}{rlp{2mm}rl}
 \verify(\nnill,\ell) &\to \true\\
 \verify(\cons(\vee(x_1,x_2,x_3),\psi),\ell) &\to \fand(\ffor(\ffor(\eval(x_1,\ell),\eval(x_2,\ell)),\eval(x_3,\ell)),\verify(\psi,\ell))\\
 \eval(\neg(n),\ell) &\to \iif \member(n,\ell) \ \tthen \false \ \eelse \true\\
 \eval(\ee(n),\ell) &\to \iif \member(n,\ell) \ \tthen \true \ \eelse \false\\
%
%
\end{array}
$$

It is sufficient to compute the set of "true" variable. This is done by the rules:
\begin{eqnarray*}
\hypo(\nnill) &=&\nnill\\
\hypo(\cons(\vee({\bf a}(x_1),{\bf b}(x_2),{\bf c}(x_3)),\ell)) &\to& \hypo(\ell)\\
\hypo(\cons(\vee({\bf a}(x_1),{\bf b}(x_2),{\bf c}(x_3)),\ell)) &\to& \cons(x_i,\hypo(\ell))\\
\hypo(\cons(\vee({\bf a}(x_1),{\bf b}(x_2),{\bf c}(x_3)),\ell)) &\to& \cons(x_i,\cons(x_j,\hypo(\ell)))\\
\hypo(\cons(\vee({\bf a}(x_1),{\bf b}(x_2),{\bf c}(x_3)),\ell)) &\to& \cons(x_i,\cons(x_j,\cons(x_k,\hypo(\ell))))\\
\main(\psi) &\to&\verify(\psi,\hypo(\psi))
\end{eqnarray*}
with ${\bf a}, {\bf b}, {\bf c} \in \{\neg, \ee\}$ and  $i \neq j \neq k \in \{1,2,3\}$. The main function is $\main$.

The rules involving $\hypo$ are not confluent, and correspond exactly to
the non-deterministic choice. (By Newman's Lemma, the systems
considered are not weakly confluent since they are terminating.)

Such a program has an interpretation, given by:
\begin{eqnarray*}
\interp{\neg}(x) = \interp{\ee}(x) &=&x+1\\
\interp{\vee}(x_1,x_2,x_3) &=& x_1+x_2+x_3+10\\
\interp{\eval}(x,y) &=&(x+1) \times y + 3\\
\interp{\verify}(x,y) &=& (x+1) \times (y+1)\\
\interp{\hypo}(x) &= &x+1\\
\interp{\main}(x) &=& \interp{\verify(x,\hypo(x))}+1
\end{eqnarray*}
\end{example}

Our notion of computation 
by a non-confluent system 
appears in Krentel's work~\cite{Krentel}, in a different context. It seems
appropriate and robust, as argued by Gr\"adel and Gurevich~\cite{GurevichGradel}. 

We suppose given a linear order $\prec$ on symbols, this order can be extended
to terms using the lexicographic ordering. We use the same notation $\prec$ for this order. Then, we 
say that a (partial) function $\phi : \Terms{\Constructors}^m \to \Terms{\Constructors}$
 is computed by a program $\langle \Variables, \Constructors, \Functions, \trs\rangle$ if for all $t_1, \ldots, t_m \in \Terms{\Constructors}$:
$$\phi(t_1, \ldots, t_n) \text{ is defined} \equivalent \phi(t_1, \ldots, t_n) = \max_\prec \{ v \mid f(t_1, \ldots, t_n) \normto v \}$$ 


In some case, we get the expected result: non-confluence corresponds exactly to
 non-determinism. Confluent programs with an interpretation
compute \Ptime, and the non-confluent ones compute $\NPtime$.


\begin{theorem}[Bonfante, Cichon, Marion and Touzet~\cite{BonfanteCichonMarionTouzet01}]\label{th:nd:pi}
Functions computed by non confluent programs with an additive polynomial interpretation are exactly 
$\NPtime$ functions;
\end{theorem}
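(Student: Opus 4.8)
The plan is to prove the two inclusions separately, using the machinery developed for the confluent additive case as a black box. The crucial fact is the one recalled at the start of Section~3: an additive polynomial (strict) interpretation bounds both the length of computations and the size of the terms manipulated. Concretely, for constructor arguments $t_1,\ldots,t_n$ the value $\interp{\main(t_1,\ldots,t_n)}$ is polynomial in $\max_i \size{t_i}$ (additivity of the constructors makes $\interp{t}$ linear in $\size{t}$, and the Max-Poly closure keeps the composite polynomial), and this single quantity simultaneously bounds the size of every term occurring along a reduction from $\main(t_1,\ldots,t_n)$ and the length of that reduction, since each rewrite step strictly decreases the interpretation. I would first record this bound, stressing that it is established path-by-path and is therefore insensitive to the loss of confluence.

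For soundness (programs $\subseteq \NPtime$) I would exhibit a nondeterministic polynomial-time machine computing $\phi$ in the max-over-branches sense that defines the $\NPtime$ function class. On input $t_1,\ldots,t_n$ the machine guesses a reduction sequence from $\main(t_1,\ldots,t_n)$ one step at a time: at each stage it nondeterministically picks a redex and an applicable rule and performs the rewrite, then checks that the term $v$ it reaches is a constructor normal form. By the bound above every branch halts within a polynomial number of steps while handling only polynomially large terms, so each branch runs in polynomial time; a branch that fails to reach a constructor normal form simply rejects and contributes no value. The maximum over the contributing branches is then exactly $\max_\prec\{v \mid \main(t_1,\ldots,t_n) \normto v\} = \phi(t_1,\ldots,t_n)$, as required.

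For completeness ($\NPtime \subseteq$ programs) I would reduce to the deterministic characterization recalled earlier, namely that every $\Ptime$ predicate is computed by a program carrying an additive strict interpretation. An $\NPtime$ computation decomposes as the guess of a polynomially long certificate followed by a $\Ptime$ verification, so I would build the program in two layers, exactly in the spirit of the 3-SAT program of Example~\ref{sat}. A family of non-confluent rules, patterned on $\hypo$, nondeterministically generates a candidate certificate of the right length; these rules receive an interpretation that gathers, in one Max-Poly expression, all the strings they may emit, so that it stays additive and polynomially bounded. The verifier is supplied by the $\Ptime$ characterization and already carries an additive strict interpretation. I would then glue the layers with $\main(x) \to \verify(x,\hypo(x))$ and set $\interp{\main}(x) = \interp{\verify(x,\hypo(x))} + 1$, choosing the symbol order so that $\true \succ \false$; some branch normalises to $\true$ exactly when the guessed certificate is accepted, and the max-semantics returns $\true$ precisely on the yes-instances.

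The main obstacle is the completeness direction, and inside it the simultaneous satisfaction of three demands on the interpretation of the guessing rules: strict compatibility with each non-confluent rule, the sub-term property, and polynomial boundedness of the set of potential certificates. Verifying strict compatibility is the delicate point, because several distinct right-hand sides share one left-hand side, so $\interp{-}$ of the left side must strictly dominate the interpretations of all alternatives at once; I expect this, together with the bookkeeping that keeps the assembled interpretation additive on constructors and Max-Poly overall, to require the most care. The soundness direction is comparatively routine once the polynomial derivation-length bound is in hand, the only thing to check being that this bound is genuinely path-local and hence survives the absence of confluence.
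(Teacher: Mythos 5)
The paper does not actually prove Theorem~\ref{th:nd:pi}: it imports it by citation from \cite{BonfanteCichonMarionTouzet01}, so there is no in-paper proof to compare against line by line. Judged on its own, your two-directional argument is correct in outline and is the expected route: the soundness direction rests on exactly the derivation-length and term-size bounds that the paper recalls as well known at the opening of its section on constructor preserving interpretations, and your completeness construction --- non-confluent $\hypo$-style guessing rules glued by $\main(x) \to \verify(x,\hypo(x))$ to a verifier supplied by the deterministic \Ptime\ characterization, with $\interp{\main}(x)=\interp{\verify(x,\hypo(x))}+1$, evaluated under the max-over-normal-forms semantics --- is precisely the scheme the paper itself exhibits on 3-SAT in Example~\ref{sat}, non-confluent $\hypo$ rules and all.

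Two points deserve more care than you give them, though neither invalidates the approach. First, the inference ``each rewrite step strictly decreases the interpretation, hence derivations have polynomial length'' is sound only if the interpretation takes values in $\Nb$ (as in \cite{BonfanteCichonMarionTouzet01}) or decreases by some uniform positive amount at each step; over $\bR$ with Max-Poly functions, which is the paper's default setting, strict decrease alone bounds nothing, so you should say which convention you adopt. Second, since ``\NPtime\ functions'' here means Krentel's max class \cite{Krentel}, the completeness direction must produce, for $f(x)=\max_y g(x,y)$ with $g$ polynomial-time, a program whose branches output the values $g(x,y)$ themselves, so that the $\max_\prec$ semantics returns $\max_y g(x,y)$; your closing device of ordering $\true \succ \false$ only settles the predicate case, and you should state explicitly that the verifier is the \Ptime\ program for $g$ (not merely an acceptor), after which the same glue works verbatim.
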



\subsection{Non confluent programs with a polynomial quasi-interpretation}

 



The following result is more surprising.

 \begin{theorem}\label{th:mpo:n}Functions computed by non confluent programs that admit a quasi-interpretation and a \PPO \ proof of termination are exactly \Pspace \ functions.
\end{theorem}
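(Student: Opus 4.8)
The plan is to prove the two inclusions separately; the upper bound $\Pspace$ is the more routine direction, and the realisation of every $\Pspace$ function is the place where the product path ordering forces real work.

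For the upper bound I would start from the size control provided by the quasi-interpretation. Since $\interp{-}$ is weakly monotone and weakly compatible it does not increase along $\to$, so every term $w$ with $\main(\many{\termthree}{n})\reftransto w$ satisfies $\interp{w}\le\interp{\main(\many{\termthree}{n})}$; additivity then bounds $\size{w}$ by a fixed polynomial $P$ in the $\size{\termthree_i}$. Restricting to the call-by-value strategy (as in the proof of Theorem~\ref{th:cp}), every term occurring in a reduction has size at most $P$, so all reducts live in a set $T$ with at most $2^{P'}$ elements for some polynomial $P'$. The value to output is $\max_\prec\{v\mid \main(\many{\termthree}{n})\normto v\}$, so I would enumerate the constructor normal forms $v$ of size $\le P$ in $\prec$-decreasing order with a polynomial-space counter and return the first one reachable from $\main(\many{\termthree}{n})$. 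The only nontrivial test is this reachability in the one-step graph on $T$; since that graph has at most $2^{P'}$ nodes, a midpoint recursion $\mathrm{REACH}(a,b,2^{j})\equiv\exists c\in T:\mathrm{REACH}(a,c,2^{j-1})\wedge\mathrm{REACH}(c,b,2^{j-1})$ decides it with recursion depth $j=P'$ and polynomially sized frames, hence in polynomial space. Termination by \PPO\ guarantees the normal forms exist; the node bound is what makes the Savitch argument applicable. This yields $\Pspace$.

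For the converse I would simulate alternation, using $\Pspace=\mathrm{APTIME}$ (equivalently, reduce from QBF). The semantics $\max_\prec$ is tailored for this: fixing $\prec$ with $\false\prec\true$, the operators $\ffor$ and $\fand$ are $\prec$-monotone, so if two subterms reduce non-deterministically to normal-form sets $A,B\subseteq\{\false,\true\}$ then $\max_\prec$ of the normal forms of $\ffor(\cdot,\cdot)$ (resp. $\fand(\cdot,\cdot)$) equals $(\max_\prec A)\vee(\max_\prec B)$ (resp. the conjunction). Thus a single non-confluent program computes full alternation: an existential step is realised by two competing non-confluent rules, whose $\max_\prec$ is the disjunction over the choices, and a universal step by one rule whose right-hand side is $\fand$ of the sub-evaluations of the constantly many successors. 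Driving the recursion by the quantifier prefix, or by a tally clock of polynomial length, then produces at the root exactly the truth value of the alternating computation.

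The crux is to make every rule of this simulation simultaneously carry an additive quasi-interpretation and be oriented by $\mpoSt$. The quasi-interpretation is the easy half: configurations, clock and formula all have polynomially bounded size and every rule is weakly size-decreasing, so a polynomially bounded additive weakly monotone interpretation exists. The \PPO\ orientation is where product status bites. Unlike a lexicographic ordering, the product extension $\infMul$ forbids \emph{any} argument of a self-recursive call from growing, and, since a function-headed term is $\mpoSt$-above the constructor data it would replace, it also forbids presenting the next configuration as an unevaluated function application inside the recursive call. The design must therefore keep a single structurally decreasing principal argument -- the consumed quantifier prefix or clock, for which $s\mpoSt\funone(\ldots,s,\ldots)$ -- while every evolving datum is carried in a \emph{fixed-shape} encoding whose content symbols are all made precedence-equivalent, so that two configurations of the same shape are $\mpoEq$ and updates reduce to local, constructor-only rewrites enlarging no component. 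Verifying the product side conditions rule by rule under this encoding, and checking that the encoding still faithfully simulates the machine, is the main obstacle; it is precisely where \PPO\ (giving $\Pspace$) diverges from the strict-interpretation case of Theorem~\ref{th:nd:pi} (giving $\NPtime$), since a strict interpretation would in addition bound the \emph{total} derivation length and thereby collapse the search to a single polynomially long branch.
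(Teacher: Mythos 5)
Your upper bound rests on the claim that weak compatibility plus additivity bounds the size of every reduct $w$ of $\main(t_1,\ldots,t_n)$ by a polynomial. This is false: additivity constrains only the \emph{constructor} interpretations, while function symbols of a quasi-interpreted program may be interpreted by $\max$ (in the paper's own QBF program one takes $\interp{f}(X_1,\ldots,X_n)=\max(X_1,\ldots,X_n)$ for almost all function symbols), so $\interp{w}$ does not majorize $\size{w}$. Concretely, the rule $\verify(\cexists(x,\phi),h)\to\ffor(\hyp(\verify(\phi,h),x,\true),\hyp(\verify(\phi,h),x,\false))$ duplicates $\verify(\phi,h)$, and iterating it along nested quantifiers produces reducts of exponential size whose interpretation never increases. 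Hence the set $T$ on which you run the Savitch recursion is not of cardinality $2^{P'}$ with polynomially sized elements, and the reachability argument collapses. What the quasi-interpretation really bounds is the size of the \emph{constructor values} occurring as arguments in the call tree; the missing ingredient is the product status itself: \PPO\ forces every recursive call between $\egalFS$-equivalent symbols to decrease, componentwise, a tuple of polynomially sized constructor terms, so the call-tree depth is polynomial and a depth-first call-by-value evaluation runs in polynomial space. In your write-up \PPO\ is invoked only to guarantee termination, which cannot suffice, since a quasi-interpretation together with termination alone does not bound the recursion depth polynomially.

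For the converse you correctly identify the obstacle---product status forbids any growing or function-headed argument in a recursive call---but the repair you propose (a fixed-shape configuration whose ``content symbols are all made precedence-equivalent'', so that updated configurations are $\mpoEq$-related) is not available in this setting: the precedence and its equivalence live on \emph{function} symbols, whereas the arguments of a recursive call are compared by the product extension of $\mpoSt$, which on constructor terms is exactly the strict subterm order. Two same-shape configurations differing in one flipped entry (say $\blanc(n)$ versus $\noir(n)$) are incomparable, not equivalent, so any rule threading an updated valuation or tape through a recursive call fails the \PPO\ test; this is precisely the ``main obstacle'' your sketch leaves open. The paper resolves it by a genuinely different mechanism, which is the heart of its proof: the valuation is never updated on the way down; instead the non-confluent rules $\verify(\cvar(x),h)\to\utrue(\Put(h,x,\true))$ and $\verify(\cvar(x),h)\to\ufalse(\Put(h,x,\false))$ guess truth values independently at the leaves, and the bottom-up phase ($\ffor$, $\fnot$, $\hyp$, $\Union$) checks that the guesses made in distinct branches are mutually compatible, collapsing to $\undef$ otherwise. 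Note also that the theorem concerns \Pspace\ \emph{functions}: the paper completes the hardness direction by closure under composition and bit-by-bit extraction of the output of a \Pspace\ function, a step absent from your proposal, which only treats decision problems.
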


The proof of the theorem essentially relies on the following example:

\newcommand{\Bool}{\mathbb{B}}

\begin{example}\label{qbf:mpo:n}[Quantified Boolean Formula] Let us compute the problem of the Quantified Boolean Formula.
 The principle of the algorithm is in two steps, the first one is top-down, the second one is bottom-up. 
In the first part, we span the computation to the leaves where we make an hypothesis on the value
of (some of) the variables. In the second part, coming back at the top, we compute the truth value of the formula and verify that the hypotheses chosen in the different branches are compatible between them. 

As above, we suppose that variables are represented by binary strings on constructors $0,1,\epsilon$. To them, we add the constructors $\cons, \vide$ to build lists,  $\blanc, \brouge, \noir$ to decorate variables. Given a variable $n$, the decorations give the truth value of the variables. $\brouge(n)$ corresponds to an unchosen value, that is true or false, 
$\noir(n)$ corresponds to false, and $\blanc(n)$ to true. The booleans are $\true, \false$ and $\bot$ serves for trash.  $\utrue$ and $\ufalse$ are two (unary) constructors representing booleans within computations. To help the reader, we give an informal type to the key functions:  $\Psi$ corresponds to formulae, $\Lambda$ to lists (of decorated variables) and $B$ to truth values.   Truth values are terms of the shape $\utrue(\Lambda)$ or $\ufalse(\Lambda)$. Finally, $V$ is the type of variables and $\Bool = \{ \true,\false\}$.
The following rules correspond to the first step of the computation. $\main: \Psi \to B$, $\verify: \Psi \times \Lambda \to B$, $\fnot: B\to B$, $\ffor: B\times B \to B$, $\hyp:B \times V \times \Bool \to B$, $\Put: \Lambda \times V \times \Bool \to \Lambda$  and $\listHyp: \Psi \to \Lambda$:

\begin{eqnarray*}
\main(\phi) & \to & \verify(\phi, \listHyp(\phi))\\
\verify(\cvar(x),h) & \to & \utrue(\Put(h,x,\true))\\
\verify(\cvar(x),h) & \to & \ufalse(\Put(h,x,\false))\\
\verify(\Cor(\phi_1,\phi_2),h) & \to & \ffor(\verify(\phi_1,h),
\verify(\phi_2,h))\\
\verify(\cnot(\phi),h) & \to & \fnot(\verify(\phi,h))\\
\verify(\cexists(x,\phi),h) & \to & \ffor(\hyp(\verify(\phi,h),x,\true),
\hyp(\verify(\phi,h),x,\false))\\
\end{eqnarray*}
where $h$ is a valuation of the variables.
The rules $\verify(\cvar(x),h) \to \utrue(\Put(h,x,\true))$ and 
$\verify(\cvar(x),h) \to  \ufalse(\Put(h,x,\false))$
are the unique rules responsible of the non confluence of the program. This is the step
where the value of variables is actually chosen. Concerning the valuations, they are 
written as lists $\cons(\text{tv}(n), \cons(\cdots))$ where the truth value $\text{tv}$ of a variable 
is in $\{ \brouge, \blanc, \noir\}$. 

Suppose that $x$ is a variable ocuring in $\Cor(\phi_1, \phi_2)$. 
One key feature is that in a computation of $\verify(\Cor(\phi_1,\phi_2),h) \to \ffor(\verify(\phi_1,h),
\verify(\phi_2,h))$, the choice of the truth value of the variable $x$ can be different in the
two sub-computation $\verify(\phi_1, h)$ and $\verify(\phi_2,h)$. 
Then, the role of the bottom-up part of the computation is to verify
that these choices are actually compatible.

The two functions $\Put$ and $\listHyp$ are computed by:
\begin{eqnarray*}
\listHyp(\cvar(x)) & \to &\vide\\
\listHyp(\Cor(\phi_1,\phi_2)) & \to&  \append(\listHyp(\phi_1),
\listHyp(\phi_2))\\
\listHyp(\cnot(\phi)) & \to&  \listHyp(\phi)\\
\listHyp(\cexists(x,\phi)) & \to & \cons(\brouge(x),\listHyp(\phi))\\
\Put(\cons(\brouge(n),l),m,\true) &\To& \iif n = m\ \tthen \cons(\blanc(n),l)\\
&&\eelse \cons(\brouge(n),\Put(l,m,\true))\\
\Put(\cons(\brouge(n),l),m,\false) &\To& \iif n = m\ \tthen \cons(\noir(n),l)\\
&&\eelse \cons(\brouge(n),\Put(l,m,\false))\\
\end{eqnarray*}

Then, the computation returns back. The top-down part of the computation returned a ``tree'' whose interior 
nodes are labeled with ``$\ffor$''
and ``$\fnot$''. At the leaves, we have $\utrue(l)$ or $\ufalse(l)$ where $l$ stores the
truth value of variables.  The logical rules are:

$$\begin{array}{lcrp{5mm}lcr}
\fnot(\ufalse(x)) & \to& \utrue(x) &&\ffor(\utrue(x),\utrue(y))& \to&   \utrue(\Union(x,y))\\
\fnot(\utrue(x)) &\to& \ufalse(x) &&\ffor(\ufalse(x),\utrue(y)) &\To& \utrue(\Union(x,y))\\
&&&&\ffor(\utrue(x),\ufalse(y))& \to  & \utrue(\Union(x,y))\\
&&&&\ffor(\ufalse(x),\ufalse(y)) &\To& \ufalse(\Union(x,y))
\end{array}$$

\noindent where $x$  and $y$ correspond to the list of hypothesis. The function
$\Union$ takes two lists and verify that they made compatible  hypotheses 
on the truth value of variables. $\brouge(n)$ is compatible with
both $\noir(n)$ and $\blanc(n)$. But $\noir(n)$ and $\blanc(n)$ are not compatible.
 
\begin{eqnarray*}
\Union(\vide,\vide) &\To& \vide\\
\Union(\cons(x,l),\cons(x,l')) &\To& \cons(x,\Union(l,l'))\\
\Union(\cons(\blanc(x),l),\cons(\brouge(x),l')) &\To& \cons(\blanc(x),\Union(l,l'))\\
\Union(\cons(\noir(x),l),\cons(\brouge(x),l')) &\To& \cons(\noir(x),\Union(l,l'))\\
\Union(\cons(\brouge(x),l),\cons(\blanc(x),l')) &\To& \cons(\blanc(x),\Union(l,l'))\\
\Union(\cons(\brouge(x),l),\cons(\noir(x),l')) &\To& \cons(\noir(x),\Union(l,l'))\\
\Union(\cons(\blanc(x),l),\cons(\noir(x),l')) &\To& \bot\\
\Union(\cons(\noir(x),l),\cons(\blanc(x),l')) &\To& \bot\\
\end{eqnarray*}

The matching process runs only for lists of equal length and variables must be presented in the same order. 
This hypothesis is fulfilled for our program. 
The last verification corresponds to the $\cexists$ constructor. For the left branch of the $\ffor$, 
the variable $x$ is supposed to be true, for the second branch it is supposed to be false. This 
verification is performed by the $\hyp$ function.

\begin{eqnarray*}
\hyp(\utrue(h),x,y) & \To &\utrue(\hyp(h,x,y))\\
\hyp(\ufalse(h),x,y) & \To &\ufalse(\hyp(h,x,y))\\
\hyp(\cons(\blanc(y),l),x,\true) & \To & \iif x = y\ \tthen l\ \\&&\eelse \cons(\blanc(y),\hyp(l,x,\true))\\
\hyp(\cons(\noir(y),l),x,\true) & \To &\iif x = y\ \tthen \undef\\&& \eelse \cons(\noir(y),\hyp(l,x,\true))\\
\hyp(\cons(\noir(y),l),x,\false) & \To &\iif x = y\ \tthen l\\&&  \eelse \cons(\noir(y),\hyp(l,x,\false))\\
\hyp(\cons(\blanc(y),l),x,\false) & \To &\iif x = y\ \tthen \undef\\&& \eelse \cons(\blanc(y),\hyp(l,x,\false))
\end{eqnarray*}

The rules for $\iif \tthen \eelse\!\!$, for $=$ and for $\append$ are omitted.

It is then routine to verify that this program is ordered by \PPO. The order $\main \succ \verify \succ \hyp \succ \Put \succ \ffor \succ \fnot \succ \listHyp \succ \Union \succ \iif \succ \append \succ =$ is compatible with the rules. 

To end the Example, we provide a quasi-intepretation. $\interp{\verify}(X,H) = X+H$, $\interp{\main}(\Phi) = 2\Phi$, and for all other function symbols we take $\interp{f}(X_1, \ldots, X_n) = \max(X_1, \ldots, X_n)$. For constructors, we take for all of them $\interp{\conone}(X_1, \ldots, X_n) = \sum_{i = 1}^n X_i + 1$.
\end{example}

\begin{proof}[of Theorem~\ref{th:mpo:n}]
Let us begin with the following proposition.

\begin{proposition}$\functional.\text{QI}.\PPO$ is closed by composition. That is if $f, g_1, \ldots g_k$ are computable with programs in $\functional.\text{QI}.\PPO$, then, the function $$\lambda x_1, \ldots, x_n.f(g_1(x_1, \ldots, x_n), \ldots, g_k(x_1, \ldots, x_n))$$ is computable by a program in   $\functional.\text{QI}.\PPO$.
\end{proposition}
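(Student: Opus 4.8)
The plan is to build the composed program explicitly rather than merely invoke the extensional characterization (one could simply remark that $\functional.\text{QI}.\PPO$ computes exactly \Ptime, which is trivially closed under composition; but a construction is more useful since it preserves the actual \PPO\ and quasi-interpretation witnesses). First I would take programs $P_f, P_{g_1}, \ldots, P_{g_k}$ computing $f$ and the $g_j$, rename their \emph{function} symbols so that the defined symbols become pairwise disjoint (the constructor signature $\Constructors$ stays shared), and form the union of all their rule sets. To this union I add a fresh main symbol $F$ together with the single rule
$$F(\many{x}{n}) \to f(g_1(\many{x}{n}), \ldots, g_k(\many{x}{n})).$$
Since the defined symbols are pairwise disjoint and $F$ has a unique left-hand side root, no new critical pairs arise, so the union remains confluent; and under the call-by-value semantics each $g_j(\many{x}{n})$ first reduces to the constructor value $\sem{g_j}(\many{x}{n})$, after which $f$ is applied, so the union indeed computes $\lambda \many{x}{n}.f(g_1(\many{x}{n}), \ldots, g_k(\many{x}{n}))$.

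Next I would supply the \PPO\ proof. I extend the disjoint union of the individual precedences by declaring $F$ strictly greater than every other symbol, leaving all pre-existing $\precFS$-relations and $\egalFS$-classes untouched. Each original rule is then oriented exactly as before, because its \PPO\ derivation mentions only the symbols of one sub-program (plus the shared constructors), whose relative ordering is unchanged, so $r \mpoSt \ell$ still holds. For the glue rule I apply the precedence-decreasing rule of Figure~\ref{fig:mpo} twice: using $f \precFS F$ it suffices to show $g_j(\many{x}{n}) \mpoSt F(\many{x}{n})$ for every $j$, and using $g_j \precFS F$ it suffices to show $x_i \mpoSt F(\many{x}{n})$, which is precisely the subterm rule. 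Hence the whole system is oriented by \PPO.

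Finally I would assemble the quasi-interpretation. On each sub-signature I keep the given $\interp{-}$, and set $\interp{F}(\many{X}{n}) = \interp{f}(\interp{g_1}(\many{X}{n}), \ldots, \interp{g_k}(\many{X}{n}))$, which is again a \MaxPoly\ function, weakly monotone, and has the weak subterm property, since $\interp{f}$ dominates each argument $\interp{g_j}(\many{X}{n})$ which in turn dominates each $X_i$. The glue rule is weakly compatible by construction, its two sides receiving equal interpretations, and every other rule is compatible as in its source program. The hard part, and the only genuinely non-routine step, is that one quasi-interpretation must assign a \emph{single} function to each shared constructor, so the three given quasi-interpretations have to be harmonized on $\Constructors$; I would handle this by first normalizing, rescaling the function interpretations of each program so that every constructor is interpreted by one common additive function, and verifying that weak compatibility survives this rescaling. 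Once the constructor interpretations agree, the pieces glue together and the resulting program lies in $\functional.\text{QI}.\PPO$.
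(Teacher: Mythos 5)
Your proposal is correct and is essentially the paper's own proof: the paper simply adds the one glue rule $h(\many{x}{n}) \to f(g_1(\many{x}{n}), \ldots, g_k(\many{x}{n}))$ with precedence $h \succ f$ and $h \succ g_i$ for all $i \leq k$, and sets $\interp{h}(\many{x}{n}) = \interp{f(g_1(\many{x}{n}), \ldots, g_k(\many{x}{n}))}$ --- exactly your fresh symbol $F$, your precedence extension, and your composed quasi-interpretation. The extra material you supply (disjoint renaming of defined symbols, confluence of the union, the explicit two-step \PPO\ derivation, and the harmonization of constructor interpretations across the component programs) fills in details the paper's three-line proof leaves implicit; in particular, the ``hard part'' you flag is never addressed in the paper, which tacitly assumes a single quasi-interpretation already covering all the constructors of the component programs.
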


\begin{proof}
One adds a new rule $h(x_1, \ldots, x_n) \to f(g_1(x_1, \ldots, x_n), \ldots, g_k(x_1, \ldots, x_n))$ with precedence $h \succ f$ and $h \succ g_i$ for all $i\leq k$. The rule is compatible with the interpretation: $$\interp{h}(x_1, \ldots, x_n) = \interp{f(g_1(x_1, \ldots, x_n), \ldots, g_k(x_1, \ldots, x_n))}.$$
\end{proof}

The Example~\ref{qbf:mpo:n} shows that a \Pspace-complete problem can be solved in the considered class of programs.  By composition of QBF with the reduction, since polynomial time functions can be computed in $\functional.\text{QI}.\PPO.n$,  any \Pspace \ predicate can be computed in $\functional.\text{QI}.\PPO.n$. Let us recall now that computing bit $i$th of the output of  a \Pspace \ function is itself computable in \Pspace. Since, building the list of the first integers below some polynomials can be computed in polynomial time, by composition, the conclusion follows.
\end{proof}

\subsection{Non confluent constructor preserving programs}

\begin{theorem}Predicates computed by non confluent programs with a polynomially bounded constructor preserving interpretation are exactly the $\Ptime$-computable predicates.
\end{theorem}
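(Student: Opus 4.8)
The plan is to prove the two inclusions separately, the completeness direction being immediate and the soundness direction being the real work.

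For completeness, every $\Ptime$ predicate is already computed by a \emph{confluent} program with a polynomially bounded constructor preserving interpretation by Theorem~\ref{th:cp}. A confluent program is a special case of a non confluent one, and on a confluent instance the set $\{v \mid \main(\many{t}{n}) \normto v\}$ has at most one element, so its $\prec$-maximum coincides with the value returned by $\sem{\main}$. Hence the two semantics agree on confluent programs and $\Ptime$ is contained in the class.

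For soundness, I would first observe that none of the bounds established for the confluent case actually used confluence. Indeed, Lemma~\ref{lem:sousterme} concerns a single rewriting step and the condition of Definition~\ref{def:cons}-(2) is stated rule by rule; likewise Proposition~\ref{prop:nf:cp} only uses that $\interp{v} \stc \interp{u}$ whenever $u \to v$, together with $\interp{t} = \{t\}$ for constructor terms $t$. Consequently every constructor normal form $v$ of $\main(\many{t}{n})$ still satisfies $v \in \interp{\main(\many{t}{n})}$, and the call-tree induction of the proof of Theorem~\ref{th:cp} applies verbatim to \emph{each} non-deterministic branch: for every reachable call $g(\many{u}{m})$ one has $\interp{g(\many{u}{m})} \sts \interp{\main(\many{t}{n})} \cup_{i=1}^n \interp{t_i}$ and each argument is a subterm of this polynomially sized set. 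Writing $S$ for the subterm closure of $\interp{\main(\many{t}{n})} \cup \{\many{t}{n}\}$, the key point is that $S$ is polynomially bounded and that every reachable value, as well as every argument of every reachable call, lies in $S$, no matter which non-deterministic choices are made.

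Next I would replace the value-cache of the confluent proof by a \emph{set}-cache. Working in the call-by-value model used there and in~\cite{Krentel}, so that arguments are reduced to ground constructor values before a rule fires, I associate to each call $g(\many{u}{m})$ with arguments in $S$ its set of normal forms $N(g,\many{u}{m}) = \{ v \mid g(\many{u}{m}) \normto v\}$. By the previous paragraph $N(g,\many{u}{m}) \stc S$, so each entry has polynomial size, and since there are at most $|\Functions| \times |S|^D$ such calls ($D$ a bound on arities) the whole cache is polynomially bounded. The cache is computed as the least fixpoint of the natural monotone operator: for each matching rule $g(\many{\patone}{m}) \to r$ with constructor matcher $\sigma$, one evaluates the normal-form set of $\sigma(r)$ structurally — a constructor node combines the normal-form sets of its constantly many subterms, a function node $h(\ldots)$ looks up the current cache entries of its subcalls — and unions the outcome into $N(g,\many{u}{m})$. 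Monotonicity together with the polynomial size bound gives termination in polynomially many iterations, each of polynomial cost, hence a polynomial-time procedure. Least-fixpoint semantics is exactly right here: a value belongs to $N$ iff it is produced by some finite derivation, so it is added at a finite stage (completeness of the iteration), while every value added is witnessed by an actual derivation (soundness); possible non-termination on other branches is harmless, as those branches simply contribute nothing. For the predicate one finally reads off $\max_\prec N(\main,\many{t}{n})$, a subset of $S$ of polynomial size.

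The step I expect to be delicate is pinning down the operational model. Under unrestricted reduction a variable occurring several times in a right-hand side could, after substitution, be duplicated and then reduced to \emph{different} normal forms in its different copies; combined with non-confluence this would a priori enlarge the set of normal forms beyond what the call-by-value fixpoint computes. Adopting the call-by-value evaluation — consistent with the confluent proof of Theorem~\ref{th:cp}, with Jones~\cite{Jones99}, and with Krentel~\cite{Krentel} — is precisely what forces every match to occur against ground constructor values, keeps the set-recursion well defined, and guarantees that the polynomially bounded set $S$ really contains all reachable values. Once this model is fixed, the remaining verifications (that the bounds are confluence-independent and that the fixpoint is reached in polynomial time) are routine adaptations of the confluent argument.
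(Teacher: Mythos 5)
Your proposal is correct and takes essentially the same route as the paper: the paper's (sketched) proof likewise observes that Proposition~\ref{prop:nf:cp} and the polynomial-size bounds from the proof of Theorem~\ref{th:cp} never use confluence, adopts Jones-style call-by-value memoization, and replaces the single-valued cache by a cache whose entries are the \emph{sets} of reachable normal forms, which remain polynomially bounded, yielding the \Ptime\ upper bound (with the lower bound inherited from Theorem~\ref{th:cp}). Your explicit least-fixpoint computation of the set-cache and the call-by-value caveat simply fill in details that the paper delegates to~\cite{AMAST06} and~\cite{Jones97}.
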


This result is close to the one of Cook in~\cite{Cook71} (Theorem 2).  He gives a characterization of \Ptime \ by means of auxiliary pushdown automata working in logspace, that is a Turing Machine working in logspace plus an extra (unbounded) stack.   It is also the case that the 
result holds  whether or not the auxiliary pushdown automata is deterministic. 

The proof follows the line of~\cite{AMAST06}, we propose thus just a sketch of the proof. 
The key observation is that arguments of recursive calls are sub-terms of the initial interpretation, a property that holds for confluent programs. As a consequence, following a call-by-value semantics, any arguments in sub-computations are some sub-terms of the initial interpretations. From that, it is possible to use memoization, see~\cite{Jones97}. The original point is that we have to manage non-determinism.




The proof of Proposition~\ref{prop:nf:cp} holds for non confluent computations. So, normal forms of a term $t$ are in $\interp{t}$. As we have seen in the proof of Theorem~\ref{th:cp}, this set has a polynomial size wrt the size of the inputs. In the non deterministic case, the cache is still a map, with the same keys, but the values of the map enumerate the list of normal forms. With the preceding observation, we can state that the map has still a polynomial size.

\paragraph{Acknowledgement.} I'd like to thank the anonymous referees for their precious help. Their sharp reading has been largely valuable to rewrite some part of the draft.

\bibliographystyle{eptcs} 
\bibliography{bib}

\end{document}